\algrenewcommand\algorithmicrequire{\textbf{Precondition:}}
\algrenewcommand\algorithmicensure{\textbf{Postcondition:}}
\newcommand*\Let[2]{\State #1 $\gets$ #2}
\newcommand{\sm}[1]{{\small \mbox{$#1$}}}
\newcommand{\defeq}{\ensuremath{\stackrel{\mathrm{def}}{=}}}
\newcommand{\hassume}{\mathit{assume}}
\newcommand{\hassert}{\mathit{assert}}
\newcommand{\hnew}{\mathit{new}}
\newcommand{\hif}{\mathit{if}}
\newcommand{\helse}{\mathit{else}}
\newcommand{\hwhile}{\mathit{while}}
\newcommand{\hnull}{\ensuremath{\mathbf{null}}}
\newcommand{\nil}{{\bf null}}
\newcommand{\path}{\mathit{path}}
\newcommand{\sem}[1]{\ensuremath{\llbracket{#1}\rrbracket}}
\newcommand{\hpathlen}{\ensuremath{\mathit{pathLength}}\xspace}
\newcommand{\halloc}{\ensuremath{\mathit{new}}\xspace}
\newcommand{\hassign}{\ensuremath{\mathit{assign}}\xspace}
\newcommand{\hlookup}{\ensuremath{\mathit{lookup}}\xspace}
\newcommand{\hupdate}{\ensuremath{\mathit{update}}\xspace}
\newcommand{\halias}{\ensuremath{\mathit{alias}}\xspace}
\newcommand{\hisnull}{\ensuremath{\mathit{isNull}}\xspace}
\newcommand{\hispath}{\ensuremath{\mathit{isPath}}\xspace}
\newcommand{\hcircular}{\ensuremath{\mathit{circular}}\xspace}
\newcommand{\id}{\ensuremath{\mathrm{id}}\xspace}
\newcommand{\hsubdivide}{\ensuremath{\mathit{subdivide}}\xspace}
\newcommand{\hfresh}{\ensuremath{\mathit{fresh}()}\xspace}
\newcommand{\shakira}{{\sc Shakira}\xspace}
\title{Propositional Reasoning about Safety and Termination of Heap-Manipulating Programs} 
\author{Cristina David \and Daniel Kroening \and Matt Lewis}
\institute{University of Oxford}
\newenvironment{keywords}{
       \list{}{\advance\topsep by0.35cm\relax\small
       \leftmargin=0cm
       \labelwidth=0.35cm
       \listparindent=0.35cm
       \itemindent\listparindent
       \rightmargin\leftmargin}\item[\hskip\labelsep
                                     \bfseries Keywords:]}
     {\endlist}
\tikzset{>=stealth}
\begin{document}
\maketitle
\pagestyle{headings}  

\begin{abstract}
This paper shows that it is possible to reason about the safety and termination of programs handling potentially  
cyclic, singly-linked lists using propositional reasoning even when the safety invariants and termination arguments 
depend on constraints over the lengths of lists.  
For this purpose, we propose the theory SLH of singly-linked lists with length, which is
able to capture non-trivial interactions between shape and arithmetic.
When using the theory of bit-vector arithmetic as a background,
SLH is efficiently decidable via a reduction to SAT.
We show the utility of SLH for software verification by using it to 
express safety invariants and termination arguments for programs 
manipulating potentially cyclic, singly-linked lists with unrestricted, unspecified sharing.
We also provide an implementation of the decision procedure
 and use it to check safety and termination proofs for several heap-manipulating programs.
\end{abstract}

\begin{keywords}
Heap, SAT, safety, termination.
\end{keywords}

\section{Introduction}

Proving safety of heap-manipulating programs is a notoriously difficult task. 
One of the main culprits is the complexity of the verification conditions
generated for such programs.  The constraints comprising these verification
conditions can be arithmetic (e.g.  the value stored at location pointed by
$x$ is equal to 3), structural (e.g.  $x$ points to an acyclic singly-linked
list), or a combination of the first two when certain structural properties
of a data structure are captured as numeric values (e.g.  the length of the
list pointed by $x$ is 3).  Solving these combined constraints
requires non-trivial interaction between shape and arithmetic.



For illustration, consider the program in Figure~\ref{fig:motivation.c},
which iterates simultaneously over the lists $x$ and $y$.  The program is
safe, i.e., there is no null pointer dereferencing and the assertion after
the loop holds.  While the absence of null pointer dereferences is trivial
to observe and prove, the fact that the assertion after the loop holds
relies on the fact that at the beginning of the program and after each loop
iteration the lengths of the lists $z$ and $t$ are equal.  Thus,
the specification language must be capable of expressing the fact that both
$z$ and $t$ reach null in the same number of steps.  Note that the
interaction between shape and arithmetic constraints is intricate, and
cannot be solved by a mere theory combination.

The problem is even more pronounced when proving termination of
heap-manipulating programs.  The reason is that, even more frequently than
in the case of safety checking, termination arguments depend on the size of
the heap data structures.  For example, a loop iterating over the nodes of
such a data structure terminates after all the reachable nodes have been
explored.  Thus, the termination argument is directly linked to the number
of nodes in the data structure.  This situation is illustrated again by the
loop in Figure~\ref{fig:motivation.c}.


There are few logics capable of expressing this type of
interdependent shape and arithmetic constraint.  One of the reasons is
that, given the complexity of the constraints, such logics can easily become
undecidable (even the simplest use of transitive closure leads to
undecidability \cite{DBLP:conf/csl/ImmermanRRSY04}), or at best inefficient.

The tricky part is identifying a logic expressive enough to capture the
corresponding constraints while still being efficiently decidable.  One work
that inspired us in this endeavor is the recent approach by Itzhaky et al.  on
reasoning about reachability between dynamically allocated memory locations
in linked lists using effectively-propositional (EPR)
reasoning~\cite{DBLP:conf/cav/ItzhakyBINS13}.  This result is appealing as
it can harness advances in SAT solvers.  The only downside is that the
logic presented in~\cite{DBLP:conf/cav/ItzhakyBINS13} is better suited for
safety than termination checking, and is best for situations where
safety does not depend on the interaction between shape and arithmetic. 
Thus, our goal is to define a logic that can be used in such scenarios
while still being reducible to SAT.

This paper shows that it is possible to reason about the safety and
termination of programs handling potentially cyclic, singly-linked lists
using propositional reasoning.  For this purpose, we present the logic SLH
which can express interdependent shape and arithmetic constraints.  We
empirically prove its utility for the verification of heap-manipulating
programs by using it to express safety invariants and termination arguments
for intricate programs with potentially cyclic, singly-linked lists with
unrestricted, unspecified sharing.

SLH is parametrised by the background arithmetic theory used to express the
length of lists (and implicitly every numeric variable).  The decision
procedure reduces validity of a formula in SLH to satisfiability of a
formula in the background theory.  Thus, SLH is decidable if the background
theory is decidable.

As we are interested in a reduction to SAT, we instantiate SLH with the
theory of bit-vector arithmetic, resulting in SLH[$\mathcal{T_{BV}}$].  This
allows us to handle non-linear operations on lists length (e.g.  the example
in Figure~\ref{fig:motivation.d}), while still retaining decidability. 
However, SLH can be combined with other background theories, e.g.,
Presburger arithmetic.

We provide an implementation of our decision procedure for
SLH[$\mathcal{T_{BV}}$] and test its efficiency by verifying a suite of
programs against safety and termination specifications expressed in SLH. 
Whenever the verification fails, our decision procedure produces a
counterexample.




\begin{framed}
\emph{Contributions:} \\
\noindent * We propose the theory SLH of singly-linked lists with length. 
SLH allows \emph{unrestricted} sharing and cycles.

\noindent * We define the strongest post-condition for formulae in SLH.

\noindent * We show the utility of SLH for software verification by using it to 
express safety invariants and termination arguments for programs 
with potentially cyclic singly-linked lists. 

\noindent * We present the  instantiation  SLH[$\mathcal{T_{BV}}$] of SLH with the theory of bit-vector arithmetic.
SLH[$\mathcal{T_{BV}}$] can express non-linear operations on the lengths of lists, while still retaining decidability. 

\noindent * We provide a reduction from satisfiability of SLH[$\mathcal{T_{BV}}$] to propositional SAT.



\noindent * We provide an implementation of the decision procedure for SLH[$\mathcal{T_{BV}}$] and test it by checking safety and termination for several heap-manipulating programs 
(against provided safety invariants and termination arguments).

\end{framed}

\section{Motivation} \label{sec:motivation}

\begin{figure}
\begin{tabular}{cc}
\begin{subfigure}[b]{0.5\textwidth}
\begin{lstlisting}[mathescape=true,flexiblecolumns=true]
  List x, y = x;
  int n = length(x), i = 0;

  while (i < n) {
    y = y$\rightarrow$next;
    i = i+1;
  }
\end{lstlisting}
\caption{}
 \label{fig:motivation.a}
\end{subfigure}%
&
\begin{subfigure}[b]{0.5\textwidth}
\begin{lstlisting}[mathescape=true,flexiblecolumns=true]
  List x, y, z = x, t = y;

  assume(length(x)==length(y));

  while (z!=NULL && t!=NULL) {
    z = z$\rightarrow$next;
    t = t$\rightarrow$next;
  }

  assert (z==NULL && t==NULL);
\end{lstlisting}
\caption{}
 \label{fig:motivation.c}
\end{subfigure}%
\\ \\ \\ \\ \\ \\ \\
\begin{subfigure}[b]{0.6\textwidth}
\begin{lstlisting}[mathescape=true,flexiblecolumns=true]
int divides(List x, List y) {
  List z = y;
  List w = x;

  assume(length(x)!=MAXINT && 
         length(y)!=MAXINT &&
         y != NULL);
       
  while (w != NULL) {
    if (z == NULL) z = y;
    z = z$\rightarrow$next; 
    w = w$\rightarrow$next;
  }

  assert(z == NULL $\Leftrightarrow$ 
         length(x)$\%$length(y)==0);
  return z == NULL;
}
\end{lstlisting}
\caption{}
 \label{fig:motivation.d}
\end{subfigure}%
&
\begin{subfigure}[b]{0.4\textwidth}
\begin{lstlisting}[mathescape=true,flexiblecolumns=true]
int isCircular(List l) {
  List p = q = l;
  
  do {
    if (p != NULL) p = p$\rightarrow$next;
    if (q != NULL) q = q$\rightarrow$next;
    if (q != NULL) q = q$\rightarrow$next;
  } 
  while (p!=NULL && q!=NULL 
         && p!=q);

  assert(p == q $\Leftrightarrow$ circular(l));
  return p == q;
}

\end{lstlisting}
\caption{}
\label{fig:motivation.e}
\end{subfigure}%
\end{tabular}
\caption{Motivational examples. \label{fig:motivation}}
\end{figure}

Consider the examples in Figure~\ref{fig:motivation}. They all capture situations where the safety  
(i.e. absence of null pointer dereferencing and no assertion failure) and termination of the program
depend on interdependent shape and arithmetic constraints. 
In this section we only give an intuitive description of these examples, and we 
revisit and formally specify them in Section~\ref{sec:revisit.motivation}. 
We assume the existence of the following two functions:
(1)~$length(x)$ returns the number of nodes on the path from $x$ to NULL if
the list pointed by $x$ is acyclic, and MAXINT otherwise; (2)~$circular(x)$
returns true iff the list pointed by $x$ is circular (i.e., $x$ is part of a
cycle).

In Figure~\ref{fig:motivation.a}, we iterate over the potentially cyclic 
singly-linked list pointed by \sm{x} a number of times equal with the result of $length(x)$. 
The program is safe (i.e. $y$ is not NULL at loop entry) and terminating.  
A safety invariant for the loop  
needs to capture the length of the path from $y$ to NULL.

The loop in Figure~\ref{fig:motivation.c} iterates over the lists pointed by $x$ and $y$, respectively, 
until one of them becomes NULL.
In order to check whether the assertion after the loop holds, the safety invariant 
must relate the length of the list pointed by $x$  
to the length of the list pointed by $y$. 
Similarly, a termination argument needs to consider the length of the two lists.

The example in Figure~\ref{fig:motivation.d} illustrates how non-linear arithmetic can be encoded via singly-linked lists. 
Thus, the loop in \sm{divides(x, y)} iterates over the list pointed by \sm{x} a number of nodes equal to the 
quotient of the integer division \sm{length(x)/length(y)} such that, after the loop, the list pointed by $z$
has a length equal with the remainder of the division.

The function in Figure~\ref{fig:motivation.e} returns true
iff the list passed in as a parameter is circular.  
The functional correctness of this function is captured by the assertion after the loop checking that 
pointers $p$ and $q$ end up being equal iff the list $l$ is circular.


  








\section{Theory of Singly Linked Lists with Length}
\label{sec:heap-theory}

In this section we introduce the theory SLH for reasoning about potentially cyclic singly linked lists.

\subsection{Informal Description of SLH}
\label{sec:slh-syntax}

We imagine that there is a set of pointer variables $x, y, \ldots$ which point
to heap cells.  The cells in the heap are arranged into singly linked lists, i.e.
each cell has a ``next'' pointer which points somewhere in the heap.  The lists can be
cyclic and two lists can share a tail, so for example the following heap is allowed in
our logic:

\parbox{\textwidth}{
\centering
\resizebox{.3\textwidth}{!}{
\begin{tikzpicture}

  \node (n1) {$\bullet$};
  \node (n2) [below of=n1] {$\bullet$};
  \node (n3) [right of=n1] {$\bullet$};
  \node (n4) [right of=n3] {$\bullet$};
  \node (n5) [right of=n4,above of=n4] {$\bullet$};
  \node (n6) [below of=n4] {$\bullet$};
  \node (x) [left of=n2] {$x$};
  \node (y) [right of=n5] {$y$};
  \node (z) [left  of=n1] {$z$};
  \node (null) [right=2em of n6] {\bf null};

  \path[draw,->] (n1) edge (n3)
		 (n2) edge (n3)
		 (n3) edge (n4)
		 (n4) edge[bend right=45] (n5)
		 (n5) edge[bend right=45] (n4);

  \path[dashed,->] (x) edge (n2)%
		   (y) edge (n5)%
		   (z) edge (n1)%
		   (null) edge (n6);
 \end{tikzpicture}
}
}
\vspace{1em}

Our logic contains functions for examining the state of the heap, along with the four
standard operations for mutating linked lists: \halloc, \hassign, \hlookup~and \hupdate.
We capture the side-effects of these mutation operators by explicitly naming the
current heap -- we introduce heap variables $h, h'$ etc. which denote the heap in which
each function is to be interpreted.  The mutation operators then become pure functions
mapping heaps to heaps. 
The heap functions of the logic are illustrated by example in Figure~\ref{fig:slh-examples}
and have the following meanings:

\begin{tabular}{rp{.75\textwidth}}
 $\halias(h, x, y)$: & do $x$ and $y$ point to the same cell in heap $h$? \\
 $\hispath(h, x, y)$: & is there a path from $x$ to $y$ in $h$? \\
 $\hpathlen(h, x, y)$: & the length of the shortest path from $x$ to  $y$ in $h$. \\
 $\hisnull(h, x)$: & is $x$ {\bf null} in $h$? \\
 $\hcircular(h, x)$: & is $x$ part of a cycle, i.e. is there some non-empty path from $x$ back to $x$ in $h$? \\
 $h' = \halloc(h, x)$: & obtain $h'$ from $h$ by allocating a new heap cell and reassigning $x$ so that it points to this cell. The newly allocated
 cell is not reachable from any other cell and its successor is {\bf null}. This models the program statement $x = new()$. 
For simplicity, we opt for this allocation policy, but we are not restricted to it. \\
 $h' = \hassign(h, x, y)$: & obtain $h'$ from $h$ by assigning $x$ so that it points to the same cell as $y$.  Models the statement $x = y$. \\
 $h' = \hlookup(h, x, y)$: & obtain $h'$ from $h$ by assigning $x$ to point to $y$'s successor.  Models the statement $x = y{\rightarrow}next$. \\
 $h' = \hupdate(h, x, y)$: & obtain $h'$ from $h$ by updating $x$'s successor to point to $y$.  Models $x{\rightarrow}next = y$.
\end{tabular}

\begin{figure}
\centering

\begin{subfigure}{.45\textwidth}
\begin{framed}
\parbox[t][8em][c]{\textwidth}{
\centering
\vfill
\resizebox{\textwidth}{!}{
 \begin{tikzpicture}
  \node (n1) {$\bullet$};
  \node [right of=n1] (n2) {$\bullet$};
  \node (n3) [right of=n2] {$\bullet$};

  \node (y) [below=1em of n1] {$y$};
  \node (x) [right of=y] {$x$};
  \node (null) [right of=x] {\bf null};

  \path[draw, ->] (n1) edge (n2) (n2) edge (n3);
  \path[dashed,->] (x) edge (n3) (y) edge (n1) (null) edge (n3);

  \node (mid) [below right = .5em and 1em of n3] {$\Rightarrow$};

  \node [above right = .5em and 1em of mid] (m1) {$\bullet$};
  \node (m2) [right of = m1] {$\bullet$};
  \node (m3) [right of = m2] {$\bullet$};

  \node (y) [below=1em of m1] {$y$};
  \node (x) [right of=y] {$x$};
  \node (null) [right of=x] {\bf null};

  \path[draw, ->] (m1) edge (m2) (m2) edge (m3);
  \path[dashed,->] (x) edge (m2) (y) edge (m1) (null) edge (m3);
 \end{tikzpicture}
 }

 \vfill
 $\hlookup(h, x, y)$ \\
 $x = y{\rightarrow}next;$
 }
 \end{framed}
\end{subfigure}
\begin{subfigure}{.45\textwidth}
\begin{framed}
\parbox[t][8em][c]{\textwidth}{
\centering
\vfill
\resizebox{\textwidth}{!}{
 \begin{tikzpicture}[shorten >=1pt,auto]
  \node (n1) {$\bullet$};
  \node (n2) [right of=n1] {$\bullet$};
  \node (n3) [right of=n2] {$\bullet$};

  \node (x) [below of=n1] {$x$};
  \node (null) [below of=n3] {\bf null};

  \path[draw, ->] (n1) edge (n2)
                  (n2) edge (n3);
  \draw[dashed,->] (x) to (n1);
  \draw[dashed,->] (null) to (n3);

  \node [below right of=n3] (mid) {$\Rightarrow$};

  \node (m1) [above right of=mid]{$\bullet$};
  \node (m2) [right of=m1] {$\bullet$};
  \node (m3) [right of=m2] {$\bullet$};
  \node (m4) [below of=m2] {$\bullet$};

  \node (x2) [below =1em of m4] {$x$};
  \node (null) [right of=x2] {\bf null};
each of these 
  \path[draw, ->] (m1) edge (m2) (m2) edge (m3) (m4) edge (m3);
  \draw[dashed,->] (x2) to (m4);
  \draw[dashed,->] (null) to (m3);
 \end{tikzpicture}
}

\vfill
 $\halloc(x)$ \\
 $x = new();$
}
\end{framed}
\end{subfigure}

\begin{subfigure}{.45\textwidth}
\begin{framed}
\centering
\resizebox{\textwidth}{!}{
 \begin{tikzpicture}[shorten >=1pt,auto]
  \node (n1) {$\bullet$};
  \node (n2) [right of=n1] {$\bullet$};
  \node (n3) [right of=n2] {$\bullet$};
  \node (n4) [below=2em of n2] {$\bullet$};

  \node (y) [below=1em of n4] {$y$};
  \node (x) [left of=y] {$x$};
  \node (null) [right of=y] {\bf null};

  \path[draw, ->] (n1) edge (n2)
                  (n2) edge (n3)
                  (n4) edge (n3);
  \path[dashed,->] (x) edge (n1)
                   (y) edge (n4)
                   (null) edge (n3);

  \node [below right of=n3] (mid) {$\Rightarrow$};

  \node (m1) [above right of=mid]{$\bullet$};
  \node (m2) [right of=m1] {$\bullet$};
  \node (m3) [right of=m2] {$\bullet$};
  \node (m4) [below=2em of m2] {$\bullet$};

  \node (y2) [below=1em of m4] {$y$};
  \node (x2) [left of=y2] {$x$};
  \node (null) [right of=y2] {\bf null};

  \path[draw, ->] (m1) edge (m2)
                  (m2) edge (m3)
                  (m4) edge (m3);
  \path[dashed,->] (x2) edge (m4)
                   (y2) edge (m4)
                   (null) edge (m3);
 \end{tikzpicture}
}

  $\hassign(x, y)$\\
  $x = y;$
 \end{framed}
\end{subfigure}
\begin{subfigure}{.45\textwidth}
\begin{framed}
\centering
\resizebox{\textwidth}{!}{
 \begin{tikzpicture}[shorten >=1pt,auto]
  \node (n1) {$\bullet$};
  \node (n2) [right of=n1] {$\bullet$};
  \node (n3) [right of=n2] {$\bullet$};
  \node (n4) [below=2em of n2] {$\bullet$};

  \node (y) [below=1em of n4] {$y$};
  \node (x) [left of=y] {$x$};
  \node (null) [right of=y] {\bf null};

  \path[draw, ->] (n1) edge (n2)
                  (n2) edge (n3)
                  (n4) edge (n3);
  \path[dashed,->] (x) edge (n1)
                   (y) edge (n4)
                   (null) edge (n3);

  \node [below right of=n3] (mid) {$\Rightarrow$};

  \node (m1) [above right of=mid]{$\bullet$};
  \node (m2) [right of=m1] {$\bullet$};
  \node (m3) [right of=m2] {$\bullet$};
  \node (m4) [below=2em of m2] {$\bullet$};

  \node (y2) [below=1em of m4] {$y$};
  \node (x2) [left of=y2] {$x$};
  \node (null) [right of=y2] {\bf null};

  \path[draw, ->] (m1) edge (m4)
                  (m2) edge (m3)
                  (m4) edge (m3);
  \path[dashed,->] (x2) edge (m1)
                   (y2) edge (m4)
                   (null) edge (m3);
 \end{tikzpicture}
}

  $\hupdate(x, y)$\\
  $x{\rightarrow}next = y;$
\end{framed}
\end{subfigure}

\begin{subfigure}{.91\textwidth}
\begin{framed}
\centering
\begin{minipage}[c]{.4\textwidth}
\begin{tikzpicture}[>=stealth',shorten >=1pt,auto,scale=.8]

  \node (n1) {$\bullet$};
  \node (n2) [below of=n1] {$\bullet$};
  \node (n3) [right of=n1] {$\bullet$};
  \node (n4) [right of=n3] {$\bullet$};
  \node (n5) [right of=n4,above of=n4] {$\bullet$};
  \node (n6) [below of=n4] {$\bullet$};
  \node (x) [below of=n2] {$x$};
  \node (y) [right of=n5] {$y$};
  \node (z) [above of=n1] {$z$};
  \node (null) [below of=n6] {\bf null};

  \path[draw,->] (n1) edge (n3)
		 (n2) edge (n3)
		 (n3) edge (n4)
		 (n4) edge[bend right=45] (n5)
		 (n5) edge[bend right=45] (n4);

  \path[dashed,->] (x) edge (n2)%
		   (y) edge (n5)%
		   (z) edge (n1)%
		   (null) edge (n6);
 \end{tikzpicture}
 \end{minipage}
 \begin{minipage}[c]{.4\textwidth}
 \vfill
 \begin{align*}
 \hpathlen(x, y) & =  3 \\
 \hispath(z, y) & = \text{true} \\
 \hispath(x, z) & = \text{false} \\
 \halias(x, z) & = \text{false} \\
 \hisnull(x) & = \text{false} \\
 \hcircular(y) & = \text{true}
 \end{align*}
 \vfill
 \end{minipage}
\end{framed}
\end{subfigure}

\caption{SLH by example\label{fig:slh-examples}}
\end{figure}
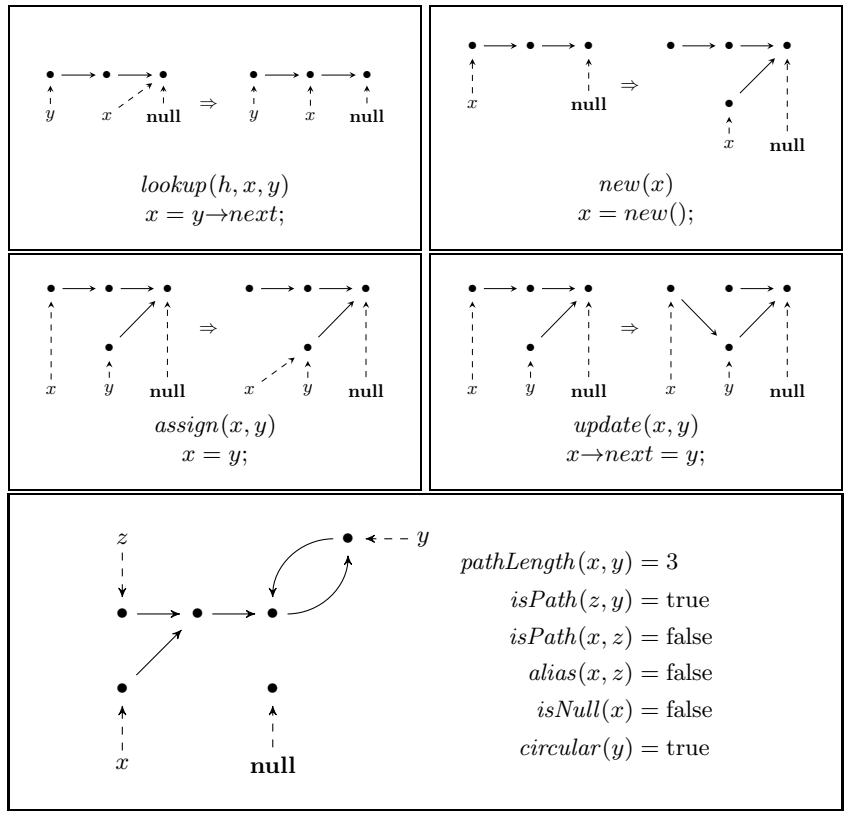

%
%
%
%
%

\subsection{Syntax of SLH}


The theory of singly-linked lists with length, SLH, uses a background arithmetic theory $\mathcal{T_{B}}$ for the length of lists (implicitly any numeric variable). 
Thus, SLH has the following signature:
\[
\begin{array}{ll}
\Sigma_{SLH} =  \Sigma_{B} \cup & \{\halias(\cdot,\cdot,\cdot), \hispath(\cdot,\cdot,\cdot), \hpathlen(\cdot,\cdot,\cdot), \hisnull(\cdot, \cdot), \\
& \hcircular(\cdot, \cdot), \cdot{=}\halloc(\cdot,\cdot), \cdot{=}\hassign(\cdot,\cdot, \cdot), \\
& \cdot{=}\hlookup(\cdot,\cdot, \cdot), \cdot{=}\hupdate(\cdot,\cdot, \cdot)\}.  
\end{array}
\]
where the nine new symbols correspond to the heap-specific functions
described in the previous section.

\paragraph{Sorts.} Heap variables (e.g. $h$ in $\halias(h,x,y)$) have sort $\mathcal{S_{H}}$, 
pointer variables have sort $\mathcal{S}_{Addr}$ (e.g. $x$ and $y$ in $\halias(h,x,y)$), 
numeric variables have sort $\mathcal{S_B}$ (e.g. $n$ in $n = \hpathlen(h,x,y)$).

\paragraph{Literal and formula.} A literal in SLH is either a heap function or a $\mathcal{T_{B}}$-literal.
A formula in SLH is a boolean combination of SLH-literals.

\subsection{Semantics of SLH}
\label{sec:semantics}

We give the semantics of SLH by defining the models in which an SLH formula holds.
An interpretation $\Gamma$ is a function mapping free variables to elements of the appropriate sort.
If an SLH formula $\phi$ holds in some interpretation $\Gamma$,
we say that $\Gamma$ \emph{models} $\phi$ and write $\Gamma \models \phi$. 

Interpretations may be constructed using the following substitution rule:

\[
 \Gamma[h \mapsto H](x) = \begin{cases}
                        H & \text{if } x = h \\
                        \Gamma(x) & \text{otherwise}
                       \end{cases}
\]

Pointer variables are considered to be a set of constant symbols
and are thus given a fixed interpretation. 
The only thing that matters is that their interpretation is pairwise different.
We assume that the pointer variables include a special name {\bf null}.  The set of pointer variables is denoted by
the symbol $P$.

We will consider the semantics of propositional logic to be standard and the
semantics of $\mathcal{T_{B}}$ given, and thus just define the semantics of
heap functions.  To do this, we will first define the class of objects that
will be used to interpret heap variables.

\begin{definition}[Heap]
A heap over pointer variables $P$ is a pair $H = \langle L, G \rangle$.
$G$ is a finite graph with vertices $V(G)$ edges $E(G)$.
$L : P \to V(G)$ is a labelling function mapping each pointer variable to a vertex of $G$.
We define the cardinality of a heap to be the cardinality of its underlying graph's
vertices:  $\|H\| = \|V(G)\|$.
\end{definition}

\begin{definition}[Singly Linked Heaps]
 A heap $H = \langle L, G \rangle$ is a singly linked heap iff
 each vertex has outdegree 1, except for a single sink vertex
 that has outdegree 0 and is labelled by {\bf null}:
 \begin{align*}
  \forall v \in V(G) . & (\mathrm{outdegree}(v) = 1 \wedge L(\mathbf{null}) \neq v) \vee \\
                       & (\mathrm{outdegree}(v) = 0 \wedge L(\mathbf{null}) = v)
 \end{align*}
\end{definition}

%

Having defined our domain of discourse, we are now in a position to define
the semantics of the various heap functions introduced in
Section~\ref{sec:slh-syntax}.  We begin with the functions examining the
state of the heap and will use a standard structural recursion to give the
semantics of the functions with respect to an implicit interpretation
$\Gamma$, so that $\sem{h} = \Gamma(h)$.  We will use the shorthand $u
\stackrel{n}{\rightarrow} v$ to say that if we start at node $u$, then
follow $n$ edges, we arrive at $v$.  We also use $L(H)$ to select the
labelling function $L$ from $H$:
\begin{align*}
u \stackrel{n}{\rightarrow} v & \defeq \langle u, v \rangle \in E^n \\
u \rightarrow^* v & \defeq \exists n \geq 0 . u \stackrel{n}{\rightarrow} v \\
u \rightarrow^+ v & \defeq \exists n > 0 . u \stackrel{n}{\rightarrow} v
\end{align*}

Note that $u \stackrel{0}{\rightarrow} u$.  The semantics of the heap functions are then:
\begin{align*}
 \sem{\hpathlen(h, x, y)}\Gamma & \defeq \min \left( \{ n \mid L(\sem{h})(x) \stackrel{n}{\rightarrow} L(\sem{h})(y) \} \cup \{ \infty \} \right) \\
 \sem{\hcircular(h, x)}\Gamma & \defeq \exists v \in V(\sem{h}) . L(\sem{h})(x) \rightarrow^+ v \wedge v \rightarrow^+ L(\sem{h})(x) \\
 \sem{\halias(h, x, y)}\Gamma & \defeq \sem{\hpathlen(h, x, y)} = 0 \\
 \sem{\hispath(h, x, y)}\Gamma & \defeq \sem{\hpathlen(h, x, y)} \neq \infty \\
 \sem{\hisnull(h, x)}\Gamma & \defeq \sem{\hpathlen(h, x, \hnull)} = 0
\end{align*}

Note that since the graph underlying $H$ has outdegree 1, \hpathlen~and
\hcircular~can be computed in $O(\| H \|)$ time, or equivalently they can be
encoded with $O(\| H \|)$ arithmetic constraints.

To define the semantics of the mutation operations, we will consider
separately the effect of each mutation on each component of the heap -- the
labelling function $L$, the vertex set $V$ and the edge set $E$.  Where a
mutation's effect on some heap component is not explicitly stated, the
effect is \id.  For example, \hassign~does not modify the vertex set, and so
$\hassign_V = \id$.  In the following definitions, we will say that
$\mathrm{succ}(v)$ is the unique vertex such that $(v, \mathrm{succ}(v)) \in
E(H)$.

\parbox{\textwidth}{
\centering
\small
\begin{align*}
 \sem{\halloc_V(h, x)}\Gamma & \defeq V(\sem{h}) \cup \{ q \} \text{where $q$ is a fresh vertex} \\
 \sem{\halloc_E(h, x)}\Gamma & \defeq E(\sem{h}) \cup \{ (q, \hnull) \} \\
 \sem{\halloc_L(h, x)}\Gamma & \defeq L(\sem{h})[x \mapsto q] \\
 \sem{\hassign_L(h, x, y)}\Gamma & \defeq L(\sem{h})[x \mapsto L(\sem{h})(y)] \\
 \sem{\hlookup_L(h, x, y)}\Gamma & \defeq L(\sem{h})[x \mapsto \mathrm{succ}(y)] \\
 \sem{\hupdate_E(h, x, y)}\Gamma & \defeq \left( E(\sem{h}) \setminus \{ (L(\sem{h})(x), \mathrm{succ}(L(\sem{h})(x))) \} \right) \cup \\
&  \{ (L(\sem{h})(x), L(\sem{h})(y)) \}
\end{align*}
}


%
%
%
%
%

\section{Deciding Validity of SLH}

We will now turn to the question of deciding the validity of an SLH formula,
that is for some formula $\phi$ we wish to determine whether $\phi$ is a
tautology or if there is some $\Gamma$ such that $\Gamma \models \neg \phi$. 
To do this, we will show that SLH enjoys a finite model property and that
the existence of a fixed-size model can be encoded directly as an arithmetic
constraint.

Our high-level strategy for this proof will be to define progressively
coarser equivalence relations on SLH heaps that respect the transformers and
observation functions.  The idea is that all of the heaps in a particular
equivalence class will be equivalent in terms of the SLH formulae they
satisfy.  We will eventually arrive at an equivalence relation
(homeomorphism) that is sound in the above sense and which is also
guaranteed to have a small heap in each equivalence class.

%

From here on we will slightly generalise the definition of a singly linked
heap and say that the underlying graph is weighted with weight function $w :
E(H) \to \mathbb{N}$.  When we omit the weight of an edge (as we have in all
heaps up until now), it is to be understood that the edge's weight is 1.

\subsection{Sound Equivalence Relations}
We will say that an equivalence relation $\approx$ is \emph{sound} if the following conditions hold
for each pair of pointer variables $x, y$ and transformer $\tau$:
\begin{align}
 \forall H, H' \cdot  H \approx H' \Rightarrow & \hpathlen(H, x, y) = \hpathlen(H', x, y) ~ \wedge \\
               & \hcircular(H, x) = \hcircular(H', x) ~ \wedge \\
               & \tau(H) \approx \tau(H')
\end{align}

The first two conditions say that if two heaps are in the same equivalence
class, there is no observation that can distinguish them.  The third
condition says that the equivalence relation is inductive with respect to
the transformers.  There is therefore no sequence of transformers and
observations that can distinguish two heaps in the same equivalence class.

We begin defining two sound equivalence relations:

\begin{definition}[Reachable Sub-Heap]
 The reachable sub-heap $H|_P$ of a heap $H$ is $H$ with vertices restricted to those
 reachable from the pointer variables $P$:
 $$V(H|_P) = \{ v  \mid \exists p \in P . \langle L(p), v \rangle \in E^* \}$$
\end{definition}

Then the relation $\{ \langle H, H' \rangle \mid H|_P = H'|_P \}$ is sound.

\begin{definition}[Heap Isomorphism]
 Two heaps $H = \langle L, G \rangle, H' = \langle G', L' \rangle$ are
 isomorphic (written $H \simeq H'$) iff there exists a graph isomorphism $f
 : G|_P \to G'|_P$ that respects the labelling function, i.e., $\forall p
 \in P .  f(L(p)) = L'(p)$.
\end{definition}

\begin{example}
$H$ and $H'$ are not isomporphic, even though their underlying graphs are.

\parbox{.9\textwidth}{
\centering
 \begin{tikzpicture}
  \node (h) {$H:$};

  \node (n1) [below right =.1em and .1em of h] {$\bullet$};
  \node (n2) [right of=n1] {$\bullet$};

  \node (x) [below of=n1] {$x$};
  \node (null) [below of=n2] {\bf null};

  \path[draw, ->] (n1) edge (n2);
  \path[dashed,->] (x) edge (n1) 
		   (null) edge (n2);

  \node (h2) [right=7em of h] {$H':$};

  \node (m1)[below right =.1em and .1em of h2] {$\bullet$};
  \node (m2) [right of=m1] {$\bullet$};

  \node (x2) [below of=m1] {$x$};
  \node (null2) [below of=m2] {\bf null};

  \path[draw, ->] (m1) edge (m2);
  \path[dashed,->] (x2) edge (m2)
		   (null2) edge (m2);
  \end{tikzpicture}
}
\end{example}

\begin{theorem}
\label{thm:iso-sound}
Heap isomorphism is a sound equivalence relation.
\end{theorem}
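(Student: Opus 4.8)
The plan is to verify the three defining conditions of a sound equivalence relation for $\simeq$, after first checking that $\simeq$ is indeed an equivalence relation. The latter is routine: the identity map witnesses reflexivity, the inverse of a label-respecting graph isomorphism is again one (giving symmetry), and the composition of two such isomorphisms is one (giving transitivity). So I would fix $H \simeq H'$ with a witnessing isomorphism $f : G|_P \to G'|_P$ satisfying $f(L(p)) = L'(p)$ for every $p \in P$, and build everything on top of $f$.

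For conditions (1) and (2), the key observation is that both \hpathlen and \hcircular depend only on the reachable sub-heap: every path starting at a labelled vertex $L(x)$ visits only vertices reachable from $P$, and likewise every cycle through $L(x)$ lies entirely within $G|_P$. It therefore suffices to reason inside $G|_P$. I would first prove, by induction on $n$, that a graph isomorphism preserves the stratified reachability relation, i.e. $u \stackrel{n}{\rightarrow} v$ in $G|_P$ iff $f(u) \stackrel{n}{\rightarrow} f(v)$ in $G'|_P$; the base case is immediate and the inductive step uses that $f$ preserves edges in both directions. Instantiating $u = L(x)$, $v = L(y)$ and using $f(L(x)) = L'(x)$ and $f(L(y)) = L'(y)$ shows that the sets $\{ n \mid L(x) \stackrel{n}{\rightarrow} L(y) \}$ computed in $H$ and in $H'$ coincide, so their minima agree and $\hpathlen(H,x,y) = \hpathlen(H',x,y)$. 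Condition (2) follows the same way: a vertex $v$ witnessing $\hcircular(H,x)$ is mapped by $f$ to a witness $f(v)$ for $\hcircular(H',x)$, and symmetrically via $f^{-1}$.

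Condition (3), closure under the transformers, is where the real work lies, and I would treat the four operators in turn. For \hassign, \hlookup and \hupdate the vertex set is unchanged, so I would reuse $f$ itself as the candidate isomorphism and check that it respects the new labelling (resp. the new edge set). This reduces to two facts: $f$ respects the old labels, and $f$ commutes with $\mathrm{succ}$, i.e. $f(\mathrm{succ}(v)) = \mathrm{succ}(f(v))$, which holds because $f$ preserves the unique outgoing edge at each vertex. Thus for \hlookup the new assignment $L(x) \mapsto \mathrm{succ}(L(y))$ is matched by $L'(x) \mapsto \mathrm{succ}(L'(y))$, and for \hupdate the removed edge $(L(x), \mathrm{succ}(L(x)))$ and the added edge $(L(x), L(y))$ map exactly onto their primed counterparts. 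For \halloc the vertex set grows by a fresh cell, so I would extend $f$ to $g = f \cup \{ q \mapsto q' \}$ sending the new cell $q$ of $H$ to the new cell $q'$ of $H'$; since both fresh cells have a single edge to $\hnull$ and $f$ fixes the $\hnull$-vertex, the map $g$ is again a label-respecting isomorphism.

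The main obstacle, and the point requiring care throughout condition (3), is that $\simeq$ is defined only on the reachable sub-heaps, yet each mutation can change which vertices are reachable from $P$ (for instance, reassigning $x$ may strand a previously reachable tail, and \hupdate rewires the graph entirely). The clean way around this is a single auxiliary lemma: a label-respecting graph isomorphism preserves reachability from the pointer variables, so it necessarily restricts to a bijection between $(\tau H)|_P$ and $(\tau H')|_P$. Because each transformer applies the very same label- and edge-rewriting on both sides and $f$ (or its extension $g$) conjugates one into the other, the induced reachable sets correspond under $f$, and restricting $f$ (resp. $g$) to the new reachable sub-heap yields the required witness $\tau(H) \simeq \tau(H')$. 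I would also dispatch the degenerate cases separately --- \hlookup when $y$ is $\hnull$ (where $\mathrm{succ}$ is undefined) and the existence of suitable fresh vertices for \halloc --- but these are side conditions rather than substantive difficulties.
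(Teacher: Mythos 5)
The paper never actually proves Theorem~\ref{thm:iso-sound}: it is stated bare, and the surrounding proofs that do appear (e.g.\ for Lemma~\ref{lem:homeo-transform}) defer the analogous transformer-by-transformer reasoning to an extended version. So there is no paper proof to compare against; your proposal has to stand on its own, and it essentially does. Your decomposition is the natural one and the details are right: the equivalence-relation check, the observation that $\hpathlen$ and $\hcircular$ only see $G|_P$, the induction showing an isomorphism preserves $u \stackrel{n}{\rightarrow} v$, and --- most importantly --- the closure argument for condition (3), where you correctly isolate the real subtlety, namely that $\simeq$ is defined on reachable sub-heaps while the transformers change what is reachable, and resolve it with the lemma that a label-respecting isomorphism conjugates the rewriting and hence restricts to a bijection between $(\tau H)|_P$ and $(\tau H')|_P$. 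Reusing $f$ for $\hassign$, $\hlookup$, $\hupdate$ (via $f(\mathrm{succ}(v)) = \mathrm{succ}(f(v))$) and extending it by $q \mapsto q'$ for $\halloc$ is exactly the right mechanism.

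One caveat you should address: the theorem sits immediately after the paper generalises heaps to \emph{weighted} graphs, and your proof is written for the unweighted reading. In the weighted setting the theorem is false unless the witnessing isomorphism is required to preserve edge weights (e.g.\ $x \stackrel{3}{\rightarrow} \hnull$ and $x \stackrel{1}{\rightarrow} \hnull$ are isomorphic as unweighted graphs but have different path lengths once $\hpathlen$ sums weights), so you must build weight preservation into the definition and thread it through your stratified-reachability induction. Moreover, weighted $\hlookup$ first subdivides the edge leaving $L(y)$ when its weight exceeds $1$ (this is exactly the case split in the paper's proof of Lemma~\ref{lem:homeo-transform}), so your claim that $\hlookup$ leaves the vertex set unchanged fails there; the fix is the same fresh-vertex extension $g = f \cup \{q \mapsto q'\}$ you already use for $\halloc$, applied to the two corresponding subdivision vertices, which works precisely because weight preservation guarantees the subdivided edges have equal weight on both sides.
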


\subsection{Heap Homeomorphism}

The final notion of equivalence we will describe is the weakest.  Loosely,
we would like to say that two heaps are equivalent if they are ``the same
shape'' and if the shortest distance between pointer variables is the same. 
To formalise this relationship, we will be using an analog of topological
homeomorphism.

\begin{definition}[Edge Subdivision]
 A graph $G'$ is a subdivison of $G$ iff $G'$ can be obtained by repeatedly subdividing edges in $G$,
 i.e., for some edge $(u, v) \in E(G)$ introducing a fresh vertex $q$ and
 replacing the edge $(u, v)$ with edges $(u, q), (q, v)$ such that $w'(u, q) + w'(q, v) = w(u, v)$.
 Subdivision for heaps is defined in terms of their underlying graphs.
\end{definition}

We define a function \hsubdivide, which subdivides an edge in a heap.  As usual,
the function is defined componentwise on the heap:
\begin{align*}
 \hsubdivide_V(H, u, v, k) = & V \cup \{ q \} \\
 \hsubdivide_E(H, u, v, k) = & \left( E \setminus \{ (u, v) \} \right) \cup \{ (u, q), (q, v) \} \\
 \hsubdivide_W(H, u, v, k) = & W(H)[(u, v) \mapsto \infty, (u, q) \mapsto k, (q, v) \mapsto W(H)(u, v) - k]
\end{align*}

The dual of edge subdivision is edge \emph{smoothing} -- if we have two
edges \mbox{$u \stackrel{n}{\longrightarrow} q \stackrel{m}{\longrightarrow}
v$}, where $q$ is unlabelled and has no other incoming edges, we can remove
$q$ and add the single edge $u \stackrel{n+m}{\longrightarrow} v$.

\begin{example}
$H'$ is a subdivision of $H$.

\parbox{.9\textwidth}{
\centering

 \begin{tikzpicture}
  \node (h) {$H:$};

  \node (n1) [below right =.1em and .1em of h] {$\bullet$};
  \node (space) [right of=n1] {};
  \node (n2) [right of=space] {$\bullet$};

  \node (x) [below of=n1] {$x$};
  \node (null) [below of=n2] {\bf null};

  \path[draw, ->] (n1) to node[above] {3} (n2);
  \path[dashed,->] (x) edge (n1)
                   (null) edge (n2);

  \node (h2) [right=9em of h] {$H':$};

  \node (m1)[below right =.1em and .1em of h2] {$\bullet$};
  \node (m2) [right of=m1] {$\bullet$};
  \node (m3) [right of=m2] {$\bullet$};

  \node (x2) [below of=m1] {$x$};
  \node (null2) [below of=m3] {\bf null};

  \path[draw, ->] (m1) edge node[above] {1} (m2)
                  (m2) edge node [above] {2} (m3);
  \path[dashed,->] (x2) edge (m1)
		   (null2) edge (m3);
 \end{tikzpicture}
}
\end{example}

\begin{lemma}[Subdividing an Edge Preserves Observations]
\label{lem:subdivide-pathlen}
 If $H'$ is obtained from $H$ by subdividing one edge, then for any $x, y$ we have:
 \begin{align}
  \hpathlen(H, x, y) & = \hpathlen(H', x, y) \\
  \hcircular(H, x) & = \hcircular(H', x)
 \end{align}
\end{lemma}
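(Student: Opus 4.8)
The plan is to show that subdivision sets up a weight-preserving correspondence between walks in $H$ and walks in $H'$, so that the set of attainable path weights between any two labelled vertices is unchanged. Since both \hpathlen and \hcircular are defined purely in terms of these weighted path lengths, the two equations will follow immediately. Throughout I read $\stackrel{n}{\rightarrow}$ in the generalised weighted setting as ``reachable by a walk of total edge-weight $n$'', which specialises to the original hop-counting semantics when every weight is $1$.

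First I would fix notation: let $(u, v)$ be the subdivided edge, $q$ the fresh vertex, with $w'(u,q) + w'(q,v) = w(u,v)$ and all other edges and weights inherited unchanged. Two structural facts are critical. \textbf{(i)} $q$ is \emph{unlabelled}, so $q \neq L(p)$ for every $p \in P$; in particular $q$ is never one of the endpoints $L(x), L(y)$. \textbf{(ii)} In $H'$ the vertex $q$ has in-degree $1$ (its only incoming edge is $(u,q)$) and out-degree $1$ (its only outgoing edge is $(q,v)$), because $H$ was a singly linked heap and we modified only the edge $(u,v)$.

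The core step is a correspondence claim: for any two vertices $a, b$ already present in $H$ and any $n$, we have $a \stackrel{n}{\rightarrow} b$ in $H$ iff $a \stackrel{n}{\rightarrow} b$ in $H'$. For the forward direction, take a walk of weight $n$ in $H$; if it does not traverse $(u,v)$ it is a walk of weight $n$ in $H'$ verbatim, and if it traverses $(u,v)$ we replace each such traversal by the segment $u \to q \to v$, which contributes weight $w'(u,q) + w'(q,v) = w(u,v)$, leaving the total unchanged. For the reverse direction, take a walk of weight $n$ in $H'$; by fact \textbf{(ii)} any visit to $q$ must enter along $(u,q)$ and leave along $(q,v)$, so every maximal segment through $q$ is exactly $u \to q \to v$ and can be smoothed back to the single edge $(u,v)$ of weight $w(u,v)$, again preserving the total weight and yielding a walk in $H$. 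Instantiating $a = L(x), b = L(y)$ (both genuine vertices of $H$ by fact \textbf{(i)}) shows $\{ n \mid L(x) \stackrel{n}{\rightarrow} L(y) \}$ is identical in $H$ and $H'$; taking the minimum over this set together with $\infty$ gives $\hpathlen(H,x,y) = \hpathlen(H',x,y)$. For \hcircular, recall it holds iff $L(x) \rightarrow^+ L(x)$, i.e. iff some $n > 0$ satisfies $L(x) \stackrel{n}{\rightarrow} L(x)$; the correspondence with $a = b = L(x)$ shows such a witness exists in $H$ iff it exists in $H'$, so $\hcircular(H,x) = \hcircular(H',x)$.

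The main obstacle is the reverse direction of the correspondence: one must argue that a walk in $H'$ cannot use the new vertex $q$ in any way other than the intended $u \to q \to v$ segment—it cannot enter $q$ except from $u$, cannot leave it except toward $v$, and cannot stop at $q$ as an endpoint of interest. This is exactly where the two structural facts are needed: the out-degree-$1$ discipline of singly linked heaps guarantees $q$'s out-edge is unique, the freshness of $q$ guarantees its in-edge is unique, and the unlabelledness of $q$ guarantees it never appears as $L(x)$ or $L(y)$. With these in hand the smoothing operation is forced and weight-preserving, and the rest of the argument is routine.
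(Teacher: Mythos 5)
Your proof is correct. Note that the paper states Lemma~\ref{lem:subdivide-pathlen} \emph{without any proof at all} --- it is treated as evident and the text moves directly to Lemma~\ref{lem:homeo-transform}, whose proof then relies on it --- so there is no official argument to compare against; what you have written is the fleshing-out of the details the authors omitted. Your argument is the natural one: a weight-preserving bijection between walks of $H$ and walks of $H'$, with the forward direction given by replacing each traversal of $(u,v)$ by $u \to q \to v$, and the reverse direction forced by your two structural facts (the fresh vertex $q$ has a unique in-edge and a unique out-edge and carries no label, so any walk between original vertices can only pass through $q$ via the full segment $u \to q \to v$, which smooths back to $(u,v)$ at equal weight). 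Two points you handle that a careless write-up would miss: first, you explicitly reinterpret $u \stackrel{n}{\rightarrow} v$ as ``walk of total weight $n$'' once weights are introduced --- the paper never says this, but without it the lemma does not even make sense in the weighted setting; second, your reduction of the paper's definition of $\hcircular$ (existence of $v$ with $L(x) \rightarrow^+ v \wedge v \rightarrow^+ L(x)$) to $L(x) \rightarrow^+ L(x)$ is legitimate, since one may take $v = L(x)$.

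One negligible caveat: the paper's definition of subdivision only requires $w'(u,q) + w'(q,v) = w(u,v)$, so if zero-weight edges were permitted, ``weight $n > 0$'' and ``walk with at least one edge'' could in principle diverge in the $\hcircular$ case. Under the implicit convention that edge weights are positive (which the paper's smoothing operation and implementation both assume), your argument is airtight.
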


\begin{definition}[Heap Homeomorphism]
 Two heaps $H, H'$ are homeomorphic (written $H \sim H'$) iff there there is a heap isomorphism from some subdivision of $H$ to some subdivision of $H'$.
\end{definition}

Intuitively, homeomorphisms preserve the topology of heaps: if two heaps are homeomorphic
then they have the same number of loops and the same number of ``joins'' (vertices with indegree $\geq 2$).

\begin{example}
\label{ex:homeomorphic}
 $H$ and $H'$ are homeomorphic, since they can each be subdivided to produce $S$.

 \parbox{.9\textwidth}{
 \centering

 \resizebox{\textwidth}{!}{
 \begin{tikzpicture}
  \node (h) {$H:$};

  \node (x) [below right =.1em and .1em of h] {$x$};
  \node (y) [below of=x] {$y$};
  
  \node (n1) [right of=x] {$\bullet$};
  \node (n2) [below of=n1] {$\bullet$};

  \path (n1) to node (mid) {} (n2);

  \node (n3) [right of=mid] {$\bullet$};
  \node (n4) [right of=n3] {$\bullet$};

  \node (n5) [right of=n4] {$\bullet$};
  \node (space2) [right of=n5] {};
  \node (n6) [right of=space2] {$\bullet$};

  \path[->,draw] (n1) edge node [above] {1} (n3)
                 (n2) edge node [below] {2} (n3)
                 (n3) edge node [above] {4} (n4)
                 (n4) edge node [above] {2} (n5)
                 (n5) edge[bend right=70] node [below] {6} (n6)
                 (n6) edge[bend right=70] node [above] {3} (n5);

  \path[->,dashed] (x) edge (n1)
                   (y) edge (n2);

  \node (h2) [right=20em of h] {$H':$};

  \node (x2) [below right =.1em and .1em of h2] {$x$};
  \node (y2) [below of=x2] {$y$};

  \node (m1) [right of=x2] {$\bullet$};
  \node (m2) [below of=m1] {$\bullet$};

  \path (m1) to node (mid2) {} (m2);

  \node (m3) [right of=mid2] {$\bullet$};
  \node (m4) [right of=m3] {};

  \node (m5) [right of=m4] {$\bullet$};
  \node (space2) [right of=m5] {};
  \node (m6) [right of=space2] {$\bullet$};

  \node (m7) [above of=space2] {$\bullet$};

  \path[->,draw] (m1) edge node [above] {1} (m3)
                 (m2) edge node [below] {2} (m3)
                 (m3) edge node [above] {6} (m5)
                 (m5) edge[bend right=70] node [below] {6} (m6)
                 (m6) edge[bend right=35] node [above] {1} (m7)
                 (m7) edge[bend right=35] node [above] {2} (m5);

  \path[->,dashed] (x2) edge (m1)
                   (y2) edge (m2);

  \node (s) [below right=7em and 8em of h] {$S:$};

  \node (x) [below right =.1em and .1em of s] {$x$};
  \node (y) [below of=x] {$y$};

  \node (n1) [right of=x] {$\bullet$};
  \node (n2) [below of=n1] {$\bullet$};

  \path (n1) to node (mid) {} (n2);

  \node (n3) [right of=mid] {$\bullet$};
  \node (n4) [right of=n3] {$\bullet$};

  \node (n5) [right of=n4] {$\bullet$};
  \node (space2) [right of=n5] {};
  \node (n6) [right of=space2] {$\bullet$};

  \node (n7) [above of=space2] {$\bullet$};

  \path[->,draw] (n1) edge node [above] {1} (n3)
                 (n2) edge node [below] {2} (n3)
                 (n3) edge node [above] {4} (n4)
                 (n4) edge node [above] {2} (n5)
                 (n5) edge[bend right=70] node [below] {6} (n6)
                 (n6) edge[bend right=35] node [above] {1} (n7)
                 (n7) edge[bend right=35] node [above] {2} (n5);

  \path[->,dashed] (x) edge (n1)
                   (y) edge (n2);
%
%
%
%
%
%
 \end{tikzpicture}
 }
 }
\end{example}

\begin{lemma}[Transformers Respect Homeomorphism]
\label{lem:homeo-transform}
For any heap transformer $\tau$, if $H_1 \sim H_2$ then $\tau(H_1) \sim \tau(H_2)$.
\end{lemma}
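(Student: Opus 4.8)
The plan is to unfold the definition of homeomorphism and then proceed by a case analysis on the four transformers $\tau \in \{\halloc, \hassign, \hlookup, \hupdate\}$. Since $H_1 \sim H_2$, by definition there are subdivisions $S_1$ of $H_1$ and $S_2$ of $H_2$ together with a label-respecting isomorphism $f : S_1 \to S_2$. The goal is, for each $\tau$, to exhibit subdivisions of $\tau(H_1)$ and $\tau(H_2)$ and a label-respecting isomorphism between them. The single most useful observation is that homeomorphism is stable under \emph{further} subdivision: if we subdivide an edge of $S_1$ and subdivide the corresponding edge of $S_2$ with matching weights, then $f$ extends to an isomorphism of the refined heaps. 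This refinement freedom lets us normalise $S_1$ and $S_2$ so that the edges incident to the pointers touched by $\tau$ all have weight $1$, and Lemma~\ref{lem:subdivide-pathlen} guarantees these normalising subdivisions change no observation.

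The two assignment-like transformers are straightforward because they act only on the labelling function. For $\hassign(h, x, y)$ I keep $S_1, S_2, f$ unchanged and merely relabel $x$ on each side to the vertex currently labelled $y$; since $f(L(S_1)(y)) = L(S_2)(y)$, the map $f$ still respects the new label of $x$, and the relabelled heaps are subdivisions of $\hassign(H_1)$ and $\hassign(H_2)$. For $\halloc(h, x)$ I add to each of $S_1, S_2$ a fresh vertex with a weight-$1$ edge to $\hnull$, relabel $x$ to it, and extend $f$ to send the two fresh vertices to one another; this is again a label-respecting isomorphism of subdivisions of the allocated heaps.

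The cases $\hlookup$ and $\hupdate$ are the crux, because they refer to the one-step successor, respectively the out-edge, of a pointer, and this is \emph{not} a homeomorphism invariant: two homeomorphic heaps may reach the immediate successor of $y$ at different physical distances. The plan is to first use the refinement freedom above to subdivide $S_1$ and $S_2$ (compatibly through $f$) until the edge leaving $L(y)$ has weight $1$ on both sides; then $\mathrm{succ}(L(S_1)(y))$ and $\mathrm{succ}(L(S_2)(y))$ are corresponding vertices under $f$, and $\hlookup$ can relabel $x$ to them in lockstep, so $f$ survives. For $\hupdate(h, x, y)$ I similarly normalise around $L(x)$ and $L(y)$, then redirect the unique out-edge of $L(x)$ to $L(y)$ on both sides; because the redirection is performed symmetrically on $f$-corresponding vertices and edges, $f$ remains an isomorphism.

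The main obstacle is precisely this successor/out-edge mismatch in the $\hlookup$ and $\hupdate$ cases, and for $\hupdate$ there is the additional difficulty that deleting $L(x)$'s old out-edge can render vertices unreachable or leave unlabelled degree-reduced vertices that must be smoothed away. I would discharge this by arguing that, after the symmetric edge surgery, the reachable sub-heaps on the two sides are still related by a restriction of $f$, and that any newly smoothable vertices are smoothed identically on both sides, so the resulting heaps remain homeomorphic. The easy transformers and the normalisation step rely only on Lemma~\ref{lem:subdivide-pathlen} and on the fact, already used for Theorem~\ref{thm:iso-sound}, that isomorphism respects the labelling function.
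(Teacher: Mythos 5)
Your proposal is correct, and its crux coincides with the paper's: \hlookup (and \hupdate) refer to one-step successors, which is not a homeomorphism invariant, and the fix is to subdivide first so that the relevant out-edge has weight $1$. The organisation, however, is genuinely different. The paper never manipulates the pair $(S_1, S_2, f)$ at all: it reduces the lemma to a commuting-square statement --- for every transformer $\tau$ and every single-edge subdivision $s$, $\tau \circ s = s \circ \tau$ --- verified componentwise, with \hlookup handled by factoring it as a subdivision followed by unweighted lookup ($\hlookup = \hlookup_U \circ s'$) and appealing to the fact that subdivisions compose. Preservation of homeomorphism then follows by pushing $\tau$ through the chain of subdivisions from $H_i$ to $S_i$ and applying soundness of isomorphism (Theorem~\ref{thm:iso-sound}) to $S_1 \simeq S_2$, so the isomorphism $f$ never has to be tracked or extended. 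Your route instead carries $f$ through a lockstep construction on both heaps, compatibly refining the two sides; this costs bookkeeping (extending $f$ over fresh subdivision vertices, restricting it after edge surgery), but it buys explicitness, and in particular it surfaces a point the paper glosses over: \hupdate severs an edge and can strand unreachable vertices, so the transformed heaps agree only up to garbage --- the paper's proof declares \hlookup the ``only difficult case'' and leaves this implicit. Your handling of it is sound for the same reason the paper's omission is harmless: heap isomorphism is defined on the reachable restriction $G|_P$, so garbage is invisible to $\simeq$. Both arguments ultimately rest on the same two pillars: subdivision preserves observations (Lemma~\ref{lem:subdivide-pathlen}) and compositions of subdivisions are subdivisions.
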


\begin{proof}
It suffices to show that for any transformer $\tau$ and single-edge subdivision $s$, the
following diagram commutes:

\parbox{.9\textwidth}{
\centering

\begin{tikzpicture}[node distance=5em]
 \node (A) {$A$};
 \node (B) [below of=A] {$B$};
 \node (C) [right of=A] {$C$};
 \node (D) [below of=C] {$D$};

 \path[->,draw] (A) edge node[above] {$\tau$} (C)
                 (A) edge node[left] {$s$} (B)
                 (B) edge node[below] {$\tau$} (D)
                 (C) edge node[right] {$s$} (D);
\end{tikzpicture}
}

We will check that $\tau \circ s = s \circ \tau$ by considering the components of each arrow
separately and using the semantics defined in Section~\ref{sec:semantics}.  The only
difficult case is for \hlookup, for which we provide the proof in full.  This case is
illustrative of the style of reasoning used for the proofs of the 
other transformers, which can be found in the extended version of this paper.


%
%
%

\paragraph{$\tau = \hlookup$:}

Now that we have weighted heaps, there are two cases for \mbox{\hlookup:} if the edge
leaving $L(y)$ does not have weight 1, we need to first subdivide so that it does; otherwise
the transformer is exactly as in the unweighted case, which can be seen easily to commute.

In the second (unweighted) case, all of the components commute due to $\id$.  Otherwise,
\hlookup is a composition of some subdivision $s'$ and then unweighted lookup:
$\hlookup = \hlookup_U \circ s'$.

\parbox{.9\textwidth}{
\centering

\begin{tikzpicture}[node distance=6em]
 \node (A) {$A$};
 \node (B) [below of=A] {$B$};
 \node (C) [right of=B] {$C$};

 \path[->,draw] (A) edge node[left] {$s'$} (B)
                    edge node[above,sloped] {$\hlookup$} (C)
                (B) edge node [below] {$\hlookup_U$} (C);
\end{tikzpicture}
}

\vspace{1em}

Our commutativity condition is then:
\begin{align*}
(\hlookup_U \circ s') \circ s = & s \circ (\hlookup_U \circ s') \\
\intertext{We know that unweighted \hlookup commutes with arbitrary subdivisions, so}
(\hlookup_U \circ s') \circ s = & s \circ (s' \circ \hlookup_U) \\
\hlookup_U \circ (s' \circ s) = & (s \circ s') \circ \hlookup_U \\
\intertext{But the composition of two subdivisions is a subdivision, so we are done.}
\end{align*}
\end{proof}

\begin{theorem}
 \label{thm:homeo-sound}
 Homeomorphism is a sound equivalence relation.
\end{theorem}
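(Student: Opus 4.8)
The plan is to prove the theorem in two parts, corresponding to the two claims bundled in its statement: that $\sim$ is an equivalence relation, and that it satisfies the three soundness conditions~(1)--(3). Almost all of the technical work has already been discharged by the preceding lemmas, so I expect the argument to be mostly a matter of assembling them correctly.

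For soundness, condition~(3) -- that every transformer respects $\sim$ -- is exactly the content of Lemma~\ref{lem:homeo-transform}, so nothing further is needed there. For conditions~(1) and~(2) I would unfold the definition of homeomorphism: if $H \sim H'$, then there are subdivisions $S$ of $H$ and $S'$ of $H'$ together with a heap isomorphism $S \simeq S'$. Since a subdivision is by definition obtained by a finite sequence of single-edge subdivisions, a routine induction on the length of that sequence, invoking Lemma~\ref{lem:subdivide-pathlen} at each step, shows $\hpathlen(H, x, y) = \hpathlen(S, x, y)$ and $\hcircular(H, x) = \hcircular(S, x)$, and likewise for the pair $H', S'$. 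Theorem~\ref{thm:iso-sound} then gives $\hpathlen(S, x, y) = \hpathlen(S', x, y)$ and $\hcircular(S, x) = \hcircular(S', x)$. Chaining these three equalities yields conditions~(1) and~(2) for every pair of pointer variables $x, y$.

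For the equivalence-relation claim, reflexivity (take the empty subdivision and the identity isomorphism) and symmetry (invert the witnessing isomorphism and swap the two subdivisions) are immediate. Transitivity is where the real content lies and is the step I expect to be the main obstacle. Suppose $H \sim H'$ is witnessed by $S \simeq T$, with $S$ a subdivision of $H$ and $T$ of $H'$, and $H' \sim H''$ is witnessed by $T' \simeq U$, with $T'$ a subdivision of $H'$ and $U$ of $H''$. I need a single heap that is a subdivision of both $H$ and $H''$ up to isomorphism. The key observation is that any two subdivisions $T, T'$ of the same heap $H'$ admit a \emph{common} subdivision $W$: along each original edge of weight $n$, the subdivisions $T$ and $T'$ each insert a finite set of cut points, corresponding to partial sums in $\mathbb{N}$ bounded by $n$, and taking the union of these two sets of cut points produces a $W$ that refines both $T$ and $T'$. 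Transporting the subdivision $T \to W$ across the isomorphism $S \simeq T$ yields a subdivision of $S$, hence of $H$, that is isomorphic to $W$; doing the symmetric thing on the $H''$ side produces a subdivision of $U$, hence of $H''$, also isomorphic to $W$. These two subdivisions then witness $H \sim H''$.

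The hard part will be the common-subdivision (refinement) argument underlying transitivity, since it requires reasoning carefully about how the integer edge weights split and recombine and checking that the two independent sets of inserted cut points can be merged into a single consistent subdivision. By contrast, the three soundness conditions themselves reduce essentially mechanically to Lemmas~\ref{lem:subdivide-pathlen} and~\ref{lem:homeo-transform} and Theorem~\ref{thm:iso-sound}.
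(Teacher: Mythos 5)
Your proposal is correct, and for the three soundness conditions it coincides with the paper's (very terse) proof, which simply declares the theorem ``a direct consequence of Lemma~\ref{lem:subdivide-pathlen} and Lemma~\ref{lem:homeo-transform}'': condition~(3) is Lemma~\ref{lem:homeo-transform}, and conditions~(1)--(2) follow, as you say, by iterating Lemma~\ref{lem:subdivide-pathlen} along the finite chain of single-edge subdivisions and crossing the witnessing isomorphism via Theorem~\ref{thm:iso-sound} --- a step the paper leaves implicit but clearly intends. Where you genuinely go beyond the paper is the equivalence-relation half of the statement: the paper's proof never establishes transitivity of $\sim$ at all. Your common-refinement argument --- merging, on each weighted edge of $H'$, the two sets of cut points induced by the subdivisions $T$ and $T'$, and transporting the resulting refinement across the isomorphisms on either side --- is exactly the missing content, and it is sound (the weights are positive integers, so the union of the two sets of partial sums determines a well-defined common subdivision $W$). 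The paper only supplies this idea implicitly and after the fact, in the categorical discussion following the theorem: the maximally subdivided \emph{shell} $\mathrm{Sh}(H)$, in which every edge has weight $1$, is a canonical common subdivision of any two homeomorphic heaps (the pushout), and plays precisely the role of your $W$. What your route buys is a self-contained verification that $\sim$ really is an equivalence relation before it is used as one (notably in Theorem~\ref{thm:small-model}, which takes kernels of equivalence classes); what the paper's route buys is brevity, at the cost of deferring --- and never quite discharging --- that proof obligation.
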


\begin{proof}
 This is a direct consequence of Lemma~\ref{lem:subdivide-pathlen} and Lemma~\ref{lem:homeo-transform}.
\end{proof}
%
%

We would now like to show that for each equivalence class induced by $\sim$,
there is a unique minimal element.  This is easy to show if we consider the
category {\bf SLH} of singly linked heaps, with edge subdivisions as arrows. 
To recap the definition of a category:

\begin{itemize}
 \item A \emph{category} is a collection of objects along with a collection of arrows, which are maps from
 one object to another.  We can compose any two arrows $A \rightarrow B \rightarrow C$ to form one arrow $A \rightarrow C$.
 Composition is associative and each object has an arrow $A \stackrel{\mathrm{id}}{\rightarrow} A$.
 \item A diagram is a picture consisting of some objects along with some arrows between the objects.  We say that
 the diagram \emph{commutes} if for any two paths between two objects, the arrow generated by composing all the
 arrows on the first path equals the composition of all the arrows on the second path.
\end{itemize}

As mentioned, objects of the category {\bf SLH} are singly linked heaps, and there is an arrow from one heap to another
if the first can be subdivided into the second.  To illustrate, Example~\ref{ex:homeomorphic}
is represented in {\bf SLH} by the following diagram:

\parbox{.9\textwidth}{
 \centering

 \begin{tikzpicture}[node distance=5em]
  \node (h)  {$H$};
  \node (x) [below of=h] {$S$};
  \node (h') [left of=x] {$H'$};

  \path[->,draw] (h) edge (x)
                 (h') edge (x);
 \end{tikzpicture}
}

Now for every pair of homeomorphic heaps $H_1 \sim H_2$ we know that there
is some $X$ that is a subdivision of both $H_1$ and $H_2$.  Clearly if we
continue subdividing edges, we will eventually arrive at a heap where every
edge has weight 1, at which point we will be unable to subdivide any
further.  Let us call this maximally subdivided heap the \emph{shell}, which
we will denote by $\mathrm{Sh}(H_1)$.  Then $\mathrm{Sh}(H_1) =
\mathrm{Sh}(H_2)$ is the pushout of the previous diagram.  Dually, there is
some $Y$ that both $H_1$ and $H_2$ are subdivisions of, and the previous
diagram has a pullback, which we shall call the \emph{kernel}.  This is the
heap in which all edges have been smoothed.  The following diagram commutes,
and since a composition of subdivisions and smoothings is a homeomorphism, 
all of the arrows (and their inverses) in this diagram are homeomorphisms.
In fact, the $H_1, H_2, X, Y, \mathrm{Sh}$ and $\mathrm{Ke}$ are exactly
an equivalance class:


\parbox{.9\textwidth}{
 \centering

 \begin{tikzpicture}[node distance=5em]
  \node (x) {$Y$};
  \node (h1) [right of=x] {$H_1$};
  \node (h2) [below of=x] {$H_2$};

  \node (y) [right of=h2] {$X$};

  \node (ke) [above left=1em and 1em of x] {Ke};
  \node (sh) [below right=1em and 1em of y] {Sh};

  \path[->] (x) edge (h1) edge (h2)
            (h1) edge (y)
            (h2) edge (y)
            (ke) edge (x) edge[bend left] (h1) edge[bend right] (h2)
            (h1) edge[bend left] (sh)
            (h2) edge[bend right] (sh)
            (y) edge (sh);

  \path[<->,dashed] (h1) edge node[above] {$\sim$} (h2);
 \end{tikzpicture}
}


%
%

\begin{lemma}[Kernels are Small]
\label{lem:small-kernels}
 For any $H$, $\|\mathrm{Ke}(H)\| \leq 2 \times \|P\|$.
\end{lemma}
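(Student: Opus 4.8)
The plan is to pin down exactly which vertices can survive the smoothing process that produces $\mathrm{Ke}(H)$, and then to bound their number by a one-line double-counting argument on edges. Throughout I would assume that $H$ coincides with its reachable sub-heap $H|_P$; this is legitimate because the relation identifying a heap with its reachable sub-heap is sound, so it suffices to bound kernels of reachable heaps. In particular, every vertex $v$ of $\mathrm{Ke}(H)$ is then either labelled (i.e. $v = L(p)$ for some $p \in P$) or is reached by a path of length $\geq 1$ and hence has $\mathrm{indegree}(v) \geq 1$.

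First I would characterise the survivors. A vertex $q$ can be smoothed precisely when it is unlabelled, has no second incoming edge (indegree $1$) and has an outgoing edge (outdegree $1$). Since the singly-linked condition forces the only outdegree-$0$ vertex to be the {\bf null} sink, which is labelled, every unlabelled vertex has outdegree $1$. Consequently an unlabelled vertex is left untouched by smoothing only if its indegree differs from $1$; combined with the reachability observation above, which excludes indegree $0$ for unlabelled vertices, this shows that in $\mathrm{Ke}(H)$ every unlabelled vertex is a ``join'' with $\mathrm{indegree} \geq 2$.

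Now the counting step. Write $n = \|\mathrm{Ke}(H)\|$, let $A$ be its labelled vertices and $B$ its unlabelled vertices, so that $n = |A| + |B|$ and $|A| \leq \|P\|$. The singly-linked structure guarantees exactly one outdegree-$0$ vertex ({\bf null}) and outdegree $1$ for all others, so $\mathrm{Ke}(H)$ has $n - 1$ edges and hence $\sum_v \mathrm{indegree}(v) = n - 1$. Since each vertex of $B$ contributes indegree at least $2$ and all indegrees are nonnegative, $2|B| \leq \sum_v \mathrm{indegree}(v) = n - 1 = |A| + |B| - 1$, i.e. $|B| \leq |A| - 1$. Therefore $n = |A| + |B| \leq 2|A| - 1 \leq 2\|P\| - 1 \leq 2 \times \|P\|$, as required. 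Cycles cause no difficulty here: the edge count $n-1$ depends only on the out-degree constraint, not on acyclicity.

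The main obstacle is getting the survivor characterisation exactly right, and in particular justifying that no unlabelled indegree-$0$ vertex can occur. This is precisely where restricting to the reachable sub-heap is essential: without it an unreachable unlabelled source would survive smoothing and the linear bound would break. I would therefore invoke that reduction explicitly at the very start and keep the rest of the argument purely combinatorial.
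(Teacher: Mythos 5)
Your proof is correct and follows essentially the same route as the paper's: partition the kernel's vertices into labelled and unlabelled ones, observe that maximal smoothing forces every unlabelled vertex to be a join of indegree $\geq 2$, and double-count indegrees against outdegrees to get $\|\mathrm{Ke}(H)\| \leq 2 \times \|P\|$. You are in fact slightly more careful than the paper, which silently excludes unlabelled indegree-$0$ vertices (your explicit restriction to the reachable sub-heap handles this) and counts the {\bf null} sink as if it had outdegree $1$; your edge count of $n-1$ yields the marginally sharper bound $2\|P\| - 1$.
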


\begin{proof}
 Since $\mathrm{Ke}(H)$ is maximally smoothed, every unlabelled vertex has indegree $\geq 2$.
 We will partition the vertices of $H$ into named and unlabelled vertices:
 \begin{align*}
 N = & \{ v \in V(H) \mid \exists p \in P . L(p) = v \} \\
 U = & \{ u \in V(H) \mid \forall p \in P . L(p) \neq u \} \\
 V(H) = & N \cup U
 \end{align*}

 Then let $n = \|N\|$ and $u = \|U\|$.
 Now, the total indegree of the underlying graph must be equal to the total outdegree, so:
 \begin{align*}
  \sum_{v \in V(H)}\!\! \mathrm{out}(v) & = \sum_{v \in V(H)}\!\! \mathrm{in}(v) \\
  n + u & = \sum_{n \in N} \mathrm{in}(n) +  \sum_{u \in U} \mathrm{in}(u) \\
        & = \sum_{n \in N} \mathrm{in}(n) + 2u + k \\
        & \intertext{where $k \geq 0$, since $\mathrm{in}(u) \geq 2$ for each $u$.} \\
  n     & = u + \underbrace{\sum_{n \in N} \mathrm{in}(n) + k}_{\geq 0} \\
  n     & \geq u
 \end{align*}
 So $u \leq n \leq \| P \|$, hence $\| \mathrm{Ke}(H) \| = n + u \leq 2 \times \|P \|$.
\end{proof}

\begin{theorem}[SLH has Small Model]
\label{thm:small-model}
 For any SLH formula $\forall h . \phi$, if there is a counterexample $\Gamma \models \neg \phi$ then there is
 $\Gamma' \models \neg \phi$ with every heap-sorted variable in $\Gamma$ being interpreted by a homeomorphism kernel.
\end{theorem}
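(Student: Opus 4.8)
The plan is to exhibit the small counterexample directly. Given $\Gamma \models \neg\phi$, define $\Gamma'$ to agree with $\Gamma$ on all pointer- and numeric-sorted variables, and to send each heap-sorted variable $h$ to the kernel of its interpretation, $\Gamma'(h) \defeq \mathrm{Ke}(\Gamma(h))$. By Lemma~\ref{lem:small-kernels} each such kernel has at most $2 \times \|P\|$ vertices, so if I can show that $\Gamma'$ remains a counterexample, i.e. $\Gamma' \models \neg\phi$, the theorem follows. Since pointer variables are fixed constants and the numeric interpretation is untouched, the entire burden is to show that the truth value of $\phi$ is unchanged when we pass from $\Gamma$ to $\Gamma'$.

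The heart of the argument is a structural induction establishing that corresponding heap terms evaluate to homeomorphic heaps: for every heap-sorted subterm $t$ of $\phi$, $\sem{t}\Gamma \sim \sem{t}\Gamma'$. The base case is a heap variable $h$, where $\sem{h}\Gamma = \Gamma(h) \sim \mathrm{Ke}(\Gamma(h)) = \Gamma'(h)$, because a heap and its kernel lie in the same $\sim$-class (they are connected by smoothings, which are homeomorphisms). For the inductive step a heap term has the form $\tau(t', \ldots)$ for one of the four transformers $\tau$, whose non-heap arguments are pointer constants interpreted identically by $\Gamma$ and $\Gamma'$; the induction hypothesis gives $\sem{t'}\Gamma \sim \sem{t'}\Gamma'$, and Lemma~\ref{lem:homeo-transform} then yields $\tau(\sem{t'}\Gamma) \sim \tau(\sem{t'}\Gamma')$, that is $\sem{t}\Gamma \sim \sem{t}\Gamma'$.

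With the term-level claim in hand I would lift it to literals and then to the whole formula. Every heap observation occurring in $\phi$ is one of $\halias, \hispath, \hpathlen, \hisnull, \hcircular$ applied to a heap term $t$; since $\sem{t}\Gamma \sim \sem{t}\Gamma'$, soundness of homeomorphism (Theorem~\ref{thm:homeo-sound}, via Lemma~\ref{lem:subdivide-pathlen}) guarantees that $\hpathlen$ and $\hcircular$ take equal values on $\sem{t}\Gamma$ and $\sem{t}\Gamma'$, and the remaining three observations agree because they are defined in terms of $\hpathlen$. Consequently every numeric value a $\hpathlen$-term feeds into an arithmetic literal is identical under the two interpretations, while purely arithmetic literals are unaffected because the numeric interpretation is unchanged; hence each SLH-literal has the same truth value under $\Gamma$ and $\Gamma'$. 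A final induction over the boolean structure of $\phi$ gives $\Gamma \models \phi \iff \Gamma' \models \phi$, and therefore $\Gamma' \models \neg\phi$. Together with the size bound of Lemma~\ref{lem:small-kernels} this is exactly the claimed small counterexample.

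The step I expect to be the main obstacle is the interaction between taking kernels and the transformers, namely ensuring the construction is coherent when $\phi$ relates several heap variables through transformer equalities such as $h' = \halloc(h, x)$. The clean way to handle this is to lean on the canonicity of kernels: homeomorphic heaps have identical kernels, so kernel-replacement commutes with the transformers up to $\sim$, which is precisely what Lemma~\ref{lem:homeo-transform} packages. I would make this explicit so that the term-level induction goes through uniformly regardless of how many heap variables occur and how they are chained; everything else is routine once soundness of $\sim$ (Theorem~\ref{thm:homeo-sound}) and the kernel size bound (Lemma~\ref{lem:small-kernels}) are invoked.
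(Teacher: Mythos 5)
Your proposal is correct and takes essentially the same route as the paper: the paper's entire proof is the single sentence that the theorem ``follows from Theorem~\ref{thm:homeo-sound} and Lemma~\ref{lem:small-kernels},'' and your kernel-replacement construction, the term-level induction via Lemma~\ref{lem:homeo-transform}, and the lifting to literals via Lemma~\ref{lem:subdivide-pathlen} are exactly the details that one-liner leaves implicit. If anything, your write-up is more careful than the paper's, since the coherence of kernel-replacement with transformer-equality literals that you flag in your final paragraph (the image of a kernel under a transformer need not itself be a kernel) is not addressed in the paper at all.
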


\begin{proof}
 This follows from Theorem~\ref{thm:homeo-sound} and Lemma~\ref{lem:small-kernels}.
\end{proof}

We can encode the existence of a small model with an arithmetic constraint
whose size is linear in the size of the SLH formula, since each of the
transformers can be encoded with a constant sized constraint and the
observation functions can be encoded with a constraint of size $O(\| H \|) =
O(\| P \|)$.  An example implementation of the constraints used to encode
each atom is given in Section~\ref{sec:implementation}.  We need one
constraint for each of the theory atoms, which leaves us with $O(\| P \|
\times \| \phi \|)$ constraints.

\begin{corollary}[Decidability of SLH] \label{thm:decidability}
 If the background theory $\mathcal{T_B}$ is decidable, then SLH is decidable. 
\end{corollary}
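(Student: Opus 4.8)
The plan is to reduce the validity problem for SLH to a satisfiability problem in the background theory $\mathcal{T_B}$ and then invoke the assumed decidability of $\mathcal{T_B}$. Deciding whether $\forall h . \phi$ is valid is the same as deciding whether the negation $\neg\phi$ has a model, i.e.\ whether there is an interpretation $\Gamma$ with $\Gamma \models \neg\phi$. First I would appeal to Theorem~\ref{thm:small-model}: if any such counterexample exists, then one exists in which every heap-sorted variable is interpreted by a homeomorphism kernel. By Lemma~\ref{lem:small-kernels} each such kernel has at most $2\times\|P\|$ vertices, and since $\|P\|$ is fixed by the formula, this yields a uniform bound $N = 2\|P\|$ on the size of every heap we must consider. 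Hence it suffices to decide the existence of a counterexample among heaps of cardinality at most $N$.

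Next I would make the search for a bounded counterexample explicit as a single formula $\Psi$ over $\mathcal{T_B}$. For each heap-sorted variable occurring in $\phi$ (including the intermediate heaps introduced by applications of the transformers $\halloc, \hassign, \hlookup, \hupdate$) I would introduce $\mathcal{T_B}$-sorted variables describing a heap of size at most $N$: variables fixing the value of the labelling function $L$ on each pointer in $P$, variables describing the edge relation, and a weight variable for each edge. The boolean structure of $\phi$ is carried over directly into the propositional structure of $\Psi$, and each SLH-literal is replaced by its arithmetic encoding. As noted after the semantics in Section~\ref{sec:semantics}, on a heap whose underlying graph has outdegree $1$ the observation functions $\hpathlen$ and $\hcircular$ can each be expressed with $O(\|H\|)=O(\|P\|)$ arithmetic constraints, while each transformer relates consecutive heaps through a constant-sized constraint. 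Summing over the at most $\|\phi\|$ literals yields a formula $\Psi$ of size $O(\|P\|\times\|\phi\|)$ whose models correspond to bounded counterexamples to $\phi$.

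Finally, $\Psi$ is a formula of $\mathcal{T_B}$: every structural and arithmetic choice has been encoded with $\mathcal{T_B}$-sorted variables, and the finite choices of labelling and edges are expressible by bounded case splits. By the assumed decidability of $\mathcal{T_B}$ we can therefore decide whether $\Psi$ is satisfiable. By construction $\Psi$ is satisfiable precisely when $\neg\phi$ has a model, equivalently when $\forall h.\phi$ is not valid, so deciding the former decides the latter.

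The part requiring the most care is the faithfulness of the arithmetic encoding of the reachability-sensitive observations. Because $\hpathlen$ and $\hcircular$ are defined through the transitive closure of the edge relation, the argument that they collapse to a polynomial-size, quantifier-free $\mathcal{T_B}$ constraint relies essentially on the outdegree-$1$ structure of singly linked heaps together with the bound $N$: following the unique successor at most $N$ times either reaches the target or closes a cycle, so the unbounded quantifier over path length can be replaced by a bounded one. I would also have to check that the encoding soundly tracks the identity of the fresh vertices created by $\halloc$ and of the subdivision vertices implicit in weighted edges, so that the transformer constraints genuinely agree with the homeomorphism reasoning underpinning Theorem~\ref{thm:small-model}; the detailed constraints deferred to Section~\ref{sec:implementation} are exactly what discharge this obligation.
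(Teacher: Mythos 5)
Your proposal is correct and follows essentially the same route as the paper: invoke the small-model property (Theorem~\ref{thm:small-model} via Lemma~\ref{lem:small-kernels}) to bound the heaps, encode the existence of a bounded counterexample as a $\mathcal{T_B}$ formula of size $O(\|P\|\times\|\phi\|)$ (constant-size constraints per transformer, $O(\|P\|)$ per observation), and then appeal to the decidability of $\mathcal{T_B}$. The only slight imprecision is your uniform bound $N = 2\|P\|$ on \emph{every} heap: intermediate heaps produced by $\halloc$ and $\hlookup$ can exceed the kernel bound by one node per occurrence (the paper's Section~\ref{sec:implementation} uses $2\|P\|+k$), but since $k$ is bounded by the formula size this does not affect the argument.
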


\begin{proof}
The existence of a small model can be encoded with a linear number of arithmetic constraints in $\mathcal{T_B}$.
\end{proof}

\section{Using SLH for Verification}
Our intention is to use SLH for reasoning about the safety and termination of programs with 
potentially cyclic singly-linked lists:
\begin{itemize} 
\item For safety, we annotate loops with safety invariants and 
generate VCs checking that each loop annotation is genuinely a safety invariant, i.e.
(1) it is satisfied by each state reachable on entry to the loop, (2) it is inductive with respect to the program's transition relation, 
and (3) excludes any states where an assertion violation takes place (the assertions include those ensuring memory safety). 
The existence of a safety invariant corresponds to the notion of
partial correctness: no assertion fails, but the program may never stop running.
\item For termination, we provide ranking functions for each loop and generate VCs to check
that the loops do terminate, i.e. the ranking function is monotonically decreasing with respect to the loop's body and (2) it is bounded from below. 
By combining these VCs with those generated for safety, we create a total-correctness specification.
\end{itemize}

The two additional items we must provide in order to be able to generate these VCs, are
 a programming language and  the strongest post-condition for formulae in SLH with respect to statements in the programming language. We do so next.  


\subsection{Programming Language} \label{sec:prog_lang}
We use the sequential programming language in Fig.~\ref{fig.language}.  It
allows heap allocation and mutation, with \sm{v} denoting a variable and
\sm{next} a pointer field.  To simplify the presentation,
we assume each data structure has only one pointer field, \sm{next}, and allow only one-level field access, denoted by
\sm{v{\rightarrow}next}.  Chained dereferences of the form
\sm{v{\rightarrow}next{\rightarrow}next{\ldots}} are handled by introducing
auxiliary variables.  The statement \sm{\hassert(\phi)} checks whether 
$\phi$ (expressed in the heap theory described in Section~\ref{sec:heap-theory}) 
holds for the current program state, whereas
\sm{\hassume(\phi)} constrains the program state.

For ease of use, when using SLH in the context of safety 
and termination verification, the SLH functions 
we expose in the specification language are side-effect free.  That is to say, we don't require the explicit 
heap $h$ to be mentioned in the specifications. 

\begin{figure*}[t]
\begin{center}
\fbox{
$
\begin{array}{lll}
datat &:=& struct~ C~ \{(typ~ v)^*\}\\
e &:=& v ~|~ v{\rightarrow}next ~|~ \mathrm{new}(C) ~|~ \nil\\ 
S &:=& v {=} e ~|~  v_1{\rightarrow}next {=} v_2 ~|~ S_1;S_2 ~|~ \hif~ (B)~ S_1~ \helse~ S_2 ~|~\\
&& \hwhile~ (B)~ S ~|~ \mathrm{assert}(\phi) ~|~ \mathrm{assume}(\phi) \\
\end{array}
$
}
\caption{Programming Language}\label{fig.language}
\end{center}
\end{figure*}

\subsection{Strongest Post-Condition}
To create a verification condition from a specification, we first decompose the
specification into Hoare triples and then compute the strongest post-condition
to generate a VC in the SLH theory.  Since SLH includes primitive operations for
heap manipulation, our strongest post-condition is easy to compute:

\begin{align*}
\mathsf{SP}(x = y, \phi) & \defeq h' = \hfresh \wedge \phi[h'/h] \wedge h = \hassign(h', x, y) \\
\mathsf{SP}(x = y{\rightarrow}next, \phi) & \defeq h' = \hfresh \wedge \phi[h'/h] \wedge h = \hlookup(h', x, y) \\
\mathsf{SP}(x = \mathrm{new}(C), \phi) & \defeq h' = \hfresh \wedge \phi[h'/h] \wedge h = \hnew(h', x, y) \\
\mathsf{SP}(x{\rightarrow}next = y, \phi) & \defeq h' = \hfresh \wedge \phi[h'/h] \wedge h = \hupdate(h', x, y) \\
\end{align*}

Where \hfresh introduces a fresh heap variable.  The remaining cases for {\sf SP} are standard.

\subsection{VC Generation Example}

\begin{figure}
\centering
\begin{tabular}{c}
\begin{lstlisting}[mathescape=true]
x = y;

while (x $\neq$ null) {
  $\{ \hispath(y, x) \}$
  x = x$\rightarrow$next;
}

assert($\hispath(y, x)$);
\end{lstlisting}
\end{tabular}
\caption{An annotated program.\label{fig:annotated}}
\end{figure}

Consider the program in Figure~\ref{fig:annotated} which has been annotated with a loop invariant.
In order to verify the partial-correctness condition that the assertion cannot fail, we must check
the following Hoare triples:

\begin{align}
 \{ \top \} \: \lstinline|x = y| \: \{ \hispath(y, x) \} & \\
 \{ \hispath(y, x) \wedge \neg \hisnull(x)\} \:  \lstinline|x = x|\rightarrow\lstinline|next| \: \{ \hispath(y, x) \} & \\
 \{ \hispath(y, x) \wedge \hisnull(x) \}  \: \mathbf{skip} \: \{ \hispath(y, x) \} &
\end{align}

Taking strongest post-condition across each of these triples generates the following SLH VCs:

\begin{align}
 \forall h . h' = \hassign(h, x, y) & \Rightarrow \hispath(h', y, x) \\
 \forall h . \hispath(h, y, x) \wedge \neg \hisnull(x) \wedge h' = \hlookup(h, x, x) & \Rightarrow \hispath(h', y, x) \\
 \forall h . \hispath(h, y, x) \wedge \hisnull(x) & \Rightarrow \hispath(h, y, x)
\end{align}

\section{Implementation}
\label{sec:implementation}
For our implementation, we instantiate SLH with the theory of bit-vector arithmetic.  
Thus, according to Corollary~\ref{thm:decidability}, the resulting theory SLH[$\mathcal{T_{BV}}$] is decidable.
In this section, we provide details about the implementation of the decision procedure via a reduction to SAT.

To check validity of an SLH[$\mathcal{T_{BV}}$] formula $\phi$ we search for a small counterexample $H$.
By Theorem~\ref{thm:small-model}, if no such small $H$ exists, there is no counterexample
and so $\phi$ is a tautology.  We encode the existence of a small counterexample by constructing
a SAT formula.  

To generate the SAT formula, we instantiate every occurence of the
SLH[$\mathcal{T_{BV}}$] functions with the functions shown in
Figure~\ref{fig:operators}.  The structure that the functions operate over
is the following, where $N$ is the number of vertices in the structure and
$P$ is the number of program variables:

\begin{lstlisting}[mathescape=true]
typedef int node;
typedef int ptr;

struct heap {
  ptr: node[P];
  succ : (node $\times$ int)[N];
  nnodes: int;
}
\end{lstlisting}

The heap contains $N$ nodes, of which \lstinline|nnodes| are allocated. 
Pointer variables are represented as integers in the range $[0, P-1]$ where
by convention {\bf null} = 0.  Each pointer variable is mapped to an
allocated node by the \lstinline|ptr| array, with the restriction that {\bf
null} maps to node 0.  The edges in the graph are encoded in the
\lstinline|succ| array where \lstinline|h.succ[n] = (m, w)| iff the edge
$(n, m)$ with weight $w$ is in the graph.  For a heap with $N$ nodes, this
structure requires $3N + 1$ integers to encode.

The implementations of the SLH[$\mathcal{T_{BV}}$] functions described in
Section~\ref{sec:slh-syntax} are given in Figure~\ref{fig:operators}.  Note
that only \verb|Alloc| and \verb|Lookup| can allocate new nodes. 
Therefore if we are searching for a counterexample heap with at most $2P$
nodes, and our formula contains $k$ occurences of \verb|Alloc| and
\verb|Lookup|, the largest heap that can occur in the counterexample will
contain no more than $2P + k$ nodes.  We can therefore encode all of the
heaps using $6P + 3k + 1$ integers each.

When constructing the SAT formula corresponding to the SLH[$\mathcal{T_{BV}}$] formula, each of the functions
can be encoded (via symbolic execution) as a formula in the background theory $\mathcal{T_B}$ of constant size, 
except for \verb|PathLength|
which contains a loop.  This loop iterates $N = 2P + k$ times and so expands to a formula
of size $O(P)$.  If the SLH[$\mathcal{T_{BV}}$] formula contains $x$ operations, the final SAT formula in $\mathcal{T_{BV}}$ is therefore
of size $x \times P$.  We use {\sc CBMC} to construct and solve the SAT formula.

One important optimisation when constructing the SAT formula involves a
symmetry reduction on the counterexamples.  Since our encoding assigns names
to each of the vertices in the graph, we can have multiple representations
for heaps that are isomorphic.  To ensure that the SAT solver only considers
a single counterexample from each homeomorphism class, we choose a canonical
representative of each class and add a constraint that the counterexample we
are looking for must be one of these canonical representatives.  We define
the canonical form of a heap such that the nodes are ordered topologically
and so that the ordering is compatible with the ordering on the progam
variables:
\begin{align*}
 \forall p, p' \in P & . p < p' \Rightarrow \forall n, n' . L(p) \rightarrow^* n \wedge L(p') \rightarrow^* n' \Rightarrow n \leq n' \\
 \forall n, n' . & n \rightarrow n' \Rightarrow n \leq n' \\
 \intertext{Where $n \rightarrow^* n'$ means $n'$ is reachable from $n$.}
\end{align*}

\begin{figure}
\begin{multicols}{2}
\scriptsize
\begin{algorithmic}
\Function{NewNode}{heap h}
 \Let{n}{h.nnodes}
 \Let{h.nnodes}{h.nnodes + 1}
 \Let{h.succ[n]}{(null, 1)}
 \State \Return{n}
\EndFunction

\Statex

\Function{Subdivide}{heap h, node a}
\Let{n}{NewNode(h)}
\Let{(b, w)}{h.succ[a]}
\Let{h.succ[a]}{(n, 1)}
\Let{h.succ[n]}{(b, w - 1)}
\State \Return n
\EndFunction

\Statex

\Function{Update}{heap h, ptr x, ptr y}
 \Let{n}{h.ptr[x]}
 \Let{m}{h.ptr[y]}
 \Let{h.succ[n]}{(m, 1)}
\EndFunction

\Statex

\Function{Assign}{heap h, ptr x, ptr y}
\Let{h.ptr[x]}{h.ptr[y]}
\EndFunction

\Statex

\Function{Lookup}{heap h, ptr x, ptr y}
 \Let{n}{h.ptr[y]}
 \Let{(n', w)}{h.succ[n]}
 \If{w $\neq$ 1}
  \Let{n'}{Subdivide(h, n)}
 \EndIf
 \Let{h.ptr[x]}{n'}
\EndFunction

\columnbreak

\Function{Alloc}{heap h, ptr x}
\Let{n}{NewNode(h)}
\Let{h.ptr[x]}{n}
\EndFunction

\Statex

\Function{PathLength}{heap h, ptr x, ptr y}
\Let{n}{h.ptr[x]}
\Let{m}{h.ptr[y]}
\Let{distance}{0}

\For{i $\gets$ 0 to h.nnodes}
 \If{n = m}
  \State \Return{distance}
 \Else
  \Let{(n, w)}{h.succ[n]}
  \Let{distance}{distance + w}
 \EndIf
\EndFor

\State \Return{$\infty$}
\EndFunction

\Statex

\Function{Circular}{heap h, ptr x}
\Let{n}{h.ptr[x]}
\Let{m}{h.succ[n]}
\Let{distance}{0}

\For{i $\gets$ 0 to h.nnodes}
 \If{m = n}
  \State \Return{True}
 \Else
   \If{n = \hnull}
    \State \Return{False}
   \EndIf
 \EndIf

 \Let{m}{h.succ[m]}
\EndFor

\State \Return{False}
\EndFunction
\end{algorithmic}
\end{multicols}

\caption{Implementation of the SLH[$\mathcal{T_{BV}}$] functions\label{fig:operators}}
\end{figure}

\section{Motivation Revisited} \label{sec:revisit.motivation}

In this section, we get back to the motivational examples in
Figure~\ref{fig:motivation} and express their safety invariants and
termination arguments in SLH. As mentioned in Section~\ref{sec:prog_lang}, for ease of use, we don't mention the explicit heap $h$ in the specifications. 

In Figure~\ref{fig:motivation.a}, assuming that the call to the $length$
function ensures the state before the loop to be $\hpathlen(h, x, \nil)=n$,
then a possible safety invariant is $\hpathlen(h, y, \nil) = n-i$.  Note
that this invariant covers both the case where the list pointed by $x$ is
acyclic and the case where it contains a cycle.  In the latter scenario,
given that $\infty-i = \infty$, the invariant is equivalent to $\hpathlen(h,
y, \nil) = \infty$.  A ranking function for this program is $R(i)=-i$.

The program in Figure~\ref{fig:motivation.c} is safe with a possible safety
invariant:
%
$$\hpathlen(h, z, \nil)=\hpathlen(h, t, \nil).$$
Similar to the previous case, this invariant covers the scenario where the
lists pointed by $x$ and $y$ are acyclic, as well as the one where they are
cyclic.  In the latter situation, the program does not terminate.

  

For the example in Figure~\ref{fig:motivation.d},  the $divides$ function is
safe and a safety invariant is:
\begin{align*}
& \hispath(x, \hnull)  \wedge \hispath(z, \hnull)  \wedge \hispath(y, \hnull)  \wedge \hispath(y, z)  \wedge  \hispath(x, w)  \wedge\\ 
& \neg \hisnull(y)  \wedge (\hpathlen(x, w) + \hpathlen(z, \hnull))\%\hpathlen(y, \hnull) = 0. 
\end{align*}

Additionally, the function terminates as witnessed by the ranking function $R(w) = \hpathlen(w,\hnull)$.

Function $isCircular$ in Figure~\ref{fig:motivation.d} is safe and terminating with the safety invariant:
$\hpathlen(l, p) \wedge \hpathlen(p, q) \wedge \hispath(q, p) {\neq} \hispath(l, \nil)$, 
and lexicographic ranking function: $R(q, p) = (\hpathlen(q, \hnull), \hpathlen(q, p))$.






\section{Experiments}

To evaluate the applicability of our theory, we created a tool for verifying
that heaps don't lie: \shakira~\cite{hipsdontlie}.  We ran \shakira on a
collection of programs manipulating singly linked lists.  This collections
includes the standard operations of traversal, reversal, sorting etc.  as
well as the motivational examples from Section~\ref{sec:motivation}.  Each
of the programs in this collection is annotated with correctness assertions
and loop invariants, as well as the standard memory-safety checks.  One of
the programs (the motivational program from Figure~\ref{fig:motivation.c})
used a non-linear loop invariant, but this did not require any special
treatment by \shakira.

To generate VCs for each program, we generated a Hoare proof and then used
{\sc CBMC}~4.9~\cite{ckl2004} to compute the strongest post-conditions for
each Hoare triple using symbolic execution.  The resulting VCs were solved
using Glucose~4.0~\cite{glucose-paper}.  As well as correctness and memory
safety, these VCs proved that each loop annotation was genuinely a loop
invariant.  For four of the programs, we annotated loops with ranking
functions and generated VCs to check that the loops terminated, thereby
creating a total-correctness specification.

None of the proofs in our collection relied on assumptions about the shape
of the heap beyond that it consisted of singly linked lists.  In particular,
our safety proofs show that the safe programs are safe even in the presence
of arbitrary cycles and sharing between pointers.

We ran our experiments on a 4-core 3.30\,GHz Core i5 with 8\,GB of RAM.  The
results of these experiments are given in Table~\ref{tbl:experiments}.

\begin{table}
\centering
 \begin{tabular}{|l|c|c|c|c|c|c|}
 \hline
      & LOC & \#VCs & Symex(s) & SAT(s) & C/E \\
  \hline
  \hline
  \multicolumn{6}{|l|}{Safe benchmarks (UNSAT VCs)} \\
  \hline
  SLL (safe) & 236 & 40 & 18.2 & 5.9 & --- \\
  \hline
  SLL (termination) & 113 & 25 & 14.7 & 9.6 & --- \\
  \hline
 \hline
  \multicolumn{6}{|l|}{Counterexamples (SAT VCs)} \\
  \hline
  CLL (nonterm) & 38 & 14 & 6.9 & 1.6 & 3 \\
  \hline
  Null-deref & 165 & 31 & 13.6 & 3.0 & 3 \\
  \hline
  Assertion Failure & 73 & 11 & 3.5 & 0.7 & 3.5 \\
  \hline
  Inadequate Invariant & 37 & 4 & 4.9 & 1.2 & 6 \\
  \hline
 \end{tabular}

 \caption{Experimental results\label{tbl:experiments}}
 \caption*{
 Legend:

 \begin{tabular}{l l}
  LOC & Total lines of code \\
  \#VCs & Number of VCs \\
  Symex(s) & Total time spent in symbolic execution to generate VCs \\
  SAT(s) & Total time spent in SAT solver \\
  C/E & Average counterexample size (number of nodes)
 \end{tabular}}

\end{table}

The top half of the table shows the aggregate results for the benchmarks in
which the specifications held, i.e., the VCs were unsatisfiable.  These
``safe'' benchmarks are divided into two categories: partial- and
total-correctness proofs.  Note that the total-correctness proofs involve
solving more complex VCs -- the partial correctness proofs solved 40 VCs in
5.9\,s, while the total correctness proofs solved only 25 VCs in 9.6\,s. 
This is due to the presence of ranking functions in the total-correctness
proofs, which by necessity introduces a higher level of arithmetic
complexity.

The bottom half of the table contains the results for benchmarks in which
the VCs were satisfiable.  Since the VCs were generated from a Hoare proof,
their satisfiability only tells us that the purported proof is not in fact a
real proof of the program's correctness.  However, \shakira outputs models
when the VCs are satisfiable and these can be examined to diagnose the cause
of the proof's failure.  For our benchmarks, the counterexamples fell into
four categories:
\begin{itemize}
 \item Non-termination due to cyclic lists.
 \item Null dereferences.
 \item A correctness assertion (not a memory-safety assertion) failing.
 \item The loop invariant being inadequate, either by being too weak to prove the required properties,
  or failing to be inductive.
\end{itemize}

A counterexample generated by \shakira is shown in Figure~\ref{fig:experiments-cex}.
This program is a variation on the motivational program from Figure~\ref{fig:motivation.d}
in which the programmer has tried to speed up the loop by unwinding it once.  The result is
that the program no longer terminates if the list contains a cycle whose size is exactly one,
as shown in the counterexample found by \shakira.

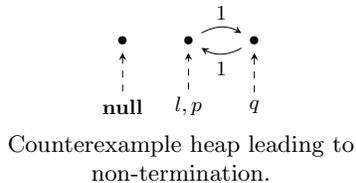
\begin{figure}
\centering
\begin{minipage}{.4\textwidth}
\centering
\begin{lstlisting}[basicstyle=\tiny,mathescape=true]
int has_cycle(list l) {
  list p = l;
  list q = l$\rightarrow$n;

  do {
    // Unwind loop to search
    // twice as fast!
    if (p != NULL) p = p$\rightarrow$n;
    if (p != NULL) p = p$\rightarrow$n;

    if (q != NULL) q = q$\rightarrow$n;
    if (q != NULL) q = q$\rightarrow$n;
    if (q != NULL) q = q$\rightarrow$n;
    if (q != NULL) q = q$\rightarrow$n;
  } while (p != q &&
           p != NULL &&
           q != NULL);

  return p == q;
}
\end{lstlisting}
\end{minipage}
\begin{minipage}{.45\textwidth}
\centering
\vfill
\resizebox{.45\textwidth}{!}{
\begin{tikzpicture}
 \node (n1) {$\bullet$};
 \node (n2) [right of=n1] {$\bullet$};
 \node (n3) [right of=n2] {$\bullet$};

 \node (null) [below of=n1] {\hnull};
 \node (l) [below of=n2] {$l, p$};
 \node (q) [below of=n3] {$q$};

 \path[->,draw] (n2) edge [bend left] node[above] {$1$} (n3)
                (n3) edge [bend left] node[below] {$1$} (n2);
                
 \path[->,dashed] (null) edge (n1)
                  (l) edge (n2)
                  (q) edge (n3);
\end{tikzpicture}
}

Counterexample heap leading to non-termination.
\vfill
\end{minipage}

\caption{A non-terminating program and the counterexample found by \shakira.\label{fig:experiments-cex}}
\end{figure}

These results show that discharging VCs written in SLH is practical with
current technology.  They further show that SLH is expressive enough to
specify safety, termination and correctness properties for difficult
programs.  When the VCs require arithmetic to be done on list lengths, as is
required when proving termination, the decision problem becomes noticeably
more difficult.  Our encoding is efficient enough that even when the VCs
contain non-linear arithmetic on path lengths, they can be solved quickly by
an off-the-shelf SAT solver.

\section{Related Works}

%

Research works on relating the shape of data structures to their numeric properties (e.g. length) 
follow several directions.
For abstract interpretation based analyses, an abstract domain that 
captures both heap and size was proposed in~\cite{DBLP:conf/vmcai/BouajjaniDES12}. 
The THOR tool \cite{DBLP:conf/cav/MagillTLT08,DBLP:conf/popl/MagillTLT10}
implements a separation logic \cite{DBLP:conf/lics/Reynolds02} based shape analysis and 
uses an off-the-shelf arithmetic analysis tool to
add support for arithmetic reasoning.
This approach is conceptually different from ours as it aims 
to separate the shape reasoning from the numeric
reasoning by constructing a numeric program that explicitly tracks changes
in data structure sizes.
In~\cite{DBLP:conf/atva/BouajjaniDES12}, Boujjani et al. introduce the logic
SLAD for reasoning about singly-linked lists and arrays with unbounded data, 
which allows to combine shape constraints, written in a fragment of separation logic, 
with data and size constraints.  
While SLAD is a powerful logic and has a decidable fragment, 
our main motivation for designing a new logic 
was its translation to SAT. A second motivation was the unrestricted sharing. 

%

%
Other recent decidable logics for reasoning about linked lists were
developed~\cite{DBLP:conf/cav/ItzhakyBINS13,DBLP:conf/cav/PiskacWZ13,DBLP:journals/jlp/YorshRSMB07,DBLP:conf/popl/MadhusudanPQ11,DBLP:conf/atva/BouajjaniDES12}. 
Piskac et al.~provide a reduction of decidable separation logic fragments
to a decidable first-order SMT theory~\cite{DBLP:conf/cav/PiskacWZ13}.
%
%
A decision procedure for an alternation-free
sub-fragment of first-order logic with transitive closure 
is described in~\cite{DBLP:conf/cav/ItzhakyBINS13}.
%
Lahiri and Qadeer introduce the Logic of Interpreted Sets and Bounded Quantification (LISBQ) capable to
express properties on the shape and data of composite data structures \cite{DBLP:conf/popl/LahiriQ08}.
In \cite{DBLP:conf/esop/BrainDKS14}, Brain et al. propose a decision procedure for
reasoning about aliasing and reachability based on Abstract Conflict Driven Clause Learning (ACDCL) \cite{DBLP:conf/popl/DSilvaHK13}. 
As they don't capture the lengths of lists, these logics are better suited for safety and less for termination proving.

%
%
%

\section{Conclusions}
We have presented the logic SLH for reasoning about potentially cyclic singly-linked lists. The main characteristics of SLH 
are the fact that it allows unrestricted sharing in the heap and can relate the structure of lists to their length, i.e. reachability constraints with numeric ones.
As SLH is parametrised by the background arithmetic theory used to express the length of lists,
we present its instantiation SLH[$\mathcal{T_{BV}}$] with the theory of bit-vector arithmetic and  provide a way of efficiently 
deciding its validity via a reduction to SAT. 
We empirically show that SLH is both efficient and expressive enough 
for reasoning about safety and (especially) termination of list programs.

\bibliography{all}{}

\begin{thebibliography}{10}

\bibitem{DBLP:conf/csl/ImmermanRRSY04}
Immerman, N., Rabinovich, A.M., Reps, T.W., Sagiv, S., Yorsh, G.:
\newblock The boundary between decidability and undecidability for
  transitive-closure logics.
\newblock In: Computer Science Logic, 18th International Workshop, {CSL} 2004,
  13th Annual Conference of the EACSL, Karpacz, Poland, September 20-24, 2004,
  Proceedings. (2004)  160--174

\bibitem{DBLP:conf/cav/ItzhakyBINS13}
Itzhaky, S., Banerjee, A., Immerman, N., Nanevski, A., Sagiv, M.:
\newblock Effectively-propositional reasoning about reachability in linked data
  structures.
\newblock In: CAV. (2013)  756--772

\bibitem{hipsdontlie}
Shakira:
\newblock Hips {D}on't {L}ie (2006)

\bibitem{ckl2004}
Clarke, E.M., Kroening, D., Lerda, F.:
\newblock A tool for checking {ANSI-C} programs.
\newblock In: TACAS.
\newblock Springer (2004)  168--176

\bibitem{glucose-paper}
Audemard, G., Simon, L.:
\newblock Predicting learnt clauses quality in modern {SAT} solvers.
\newblock IJCAI'09, Morgan Kaufmann (2009)  399--404

\bibitem{DBLP:conf/vmcai/BouajjaniDES12}
Bouajjani, A., Dragoi, C., Enea, C., Sighireanu, M.:
\newblock Abstract domains for automated reasoning about list-manipulating
  programs with infinite data.
\newblock In: Verification, Model Checking, and Abstract Interpretation - 13th
  International Conference, {VMCAI} 2012, Philadelphia, PA, USA, January 22-24,
  2012. Proceedings. (2012)  1--22

\bibitem{DBLP:conf/cav/MagillTLT08}
Magill, S., Tsai, M., Lee, P., Tsay, Y.:
\newblock {THOR:} {A} tool for reasoning about shape and arithmetic.
\newblock In: Computer Aided Verification, 20th International Conference, {CAV}
  2008, Princeton, NJ, USA, July 7-14, 2008, Proceedings. (2008)  428--432

\bibitem{DBLP:conf/popl/MagillTLT10}
Magill, S., Tsai, M., Lee, P., Tsay, Y.:
\newblock Automatic numeric abstractions for heap-manipulating programs.
\newblock In: Proceedings of the 37th {ACM} {SIGPLAN-SIGACT} Symposium on
  Principles of Programming Languages, {POPL} 2010, Madrid, Spain, January
  17-23, 2010. (2010)  211--222

\bibitem{DBLP:conf/lics/Reynolds02}
Reynolds, J.C.:
\newblock Separation logic: A logic for shared mutable data structures.
\newblock In: LICS. (2002)  55--74

\bibitem{DBLP:conf/atva/BouajjaniDES12}
Bouajjani, A., Dragoi, C., Enea, C., Sighireanu, M.:
\newblock Accurate invariant checking for programs manipulating lists and
  arrays with infinite data.
\newblock In: ATVA. (2012)

\bibitem{DBLP:conf/cav/PiskacWZ13}
Piskac, R., Wies, T., Zufferey, D.:
\newblock Automating separation logic using {SMT}.
\newblock In: CAV. (2013)  773--789

\bibitem{DBLP:journals/jlp/YorshRSMB07}
Yorsh, G., Rabinovich, A.M., Sagiv, M., Meyer, A., Bouajjani, A.:
\newblock A logic of reachable patterns in linked data-structures.
\newblock J.Log.Alg.Prog. \textbf{73}(1-2) (2007)

\bibitem{DBLP:conf/popl/MadhusudanPQ11}
Madhusudan, P., Parlato, G., Qiu, X.:
\newblock Decidable logics combining heap structures and data.
\newblock In: POPL. (2011)  611--622

\bibitem{DBLP:conf/popl/LahiriQ08}
Lahiri, S.K., Qadeer, S.:
\newblock Back to the future: revisiting precise program verification using
  {SMT} solvers.
\newblock In: Proceedings of the 35th {ACM} {SIGPLAN-SIGACT} Symposium on
  Principles of Programming Languages, {POPL} 2008, San Francisco, California,
  USA, January 7-12, 2008. (2008)  171--182

\bibitem{DBLP:conf/esop/BrainDKS14}
Brain, M., David, C., Kroening, D., Schrammel, P.:
\newblock Model and proof generation for heap-manipulating programs.
\newblock In: {ESOP}. (2014)  432--452

\bibitem{DBLP:conf/popl/DSilvaHK13}
D'Silva, V., Haller, L., Kroening, D.:
\newblock Abstract conflict driven learning.
\newblock In: POPL. (2013)  143--154

\end{thebibliography}
\bibliographystyle{splncs}

\end{document}